\newtheorem{theorem}{Theorem}
\newtheorem{lemma}{Lemma}
\newtheorem{corollary}{Corollary}
\begin{document}
%
\title{Combinatorial Auction-Based Pricing for Multi-tenant Autonomous Vehicle Public Transportation System}

\author{Albert Y.S. Lam
\thanks{A.Y.S. Lam is with the Department of Electrical and Electronic Engineering, The University of Hong Kong, Pokfulam, Hong Kong (e-mail: ayslam@eee.hku.hk).}
}



\maketitle

\begin{abstract}
A smart city provides its people with high standard of living through advanced technologies and transport is one of the major foci. With the advent of autonomous vehicles (AVs), an AV-based public transportation system has been proposed recently, which is capable of providing new forms of transportation services with high efficiency, high flexibility, and low cost. For the benefit of passengers, multitenancy can increase market competition leading to lower service charge and higher quality of service. In this paper, we study the pricing issue of the multi-tenant AV public transportation system and three types of services are defined. The pricing process for each service type is modeled as a combinatorial auction, in which the service providers, as bidders, compete for offering transportation services. The winners of the auction are determined through an integer linear program. To prevent the bidders from raising their bids for
higher returns, we propose a strategy-proof Vickrey-Clarke-Groves-based charging mechanism, which can maximize the social welfare, to settle the final charges for the customers. We perform extensive simulations to verify the analytical results and evaluate the performance of the charging mechanism.
\end{abstract}
\begin{IEEEkeywords}
Autonomous vehicle, combinatorial auction, smart city, VCG mechanism.
\end{IEEEkeywords}
\IEEEpeerreviewmaketitle

\section{Introduction}

\IEEEPARstart{W}{ith} limited resources and high population density, people of many cities suffer from deteriorating living environments, health problems, traffic congestion, and pollution. To improve the standard of living, we may turn a city into a smart city through smarter utilization of resources with modern technologies \cite{smartcity}. Transport is one of key sectors in smart city research and development. The future transportation system should be able to accommodate the massive volume of passengers, support a broader range of services, and  lessen its impact on the environment.  To design an intelligent transportation system, we may contemplate the infrastructure, the vehicles, and the supportive management system. In most of the well-developed cities, there is not much room to have large modifications to the existing infrastructure and developing infrastructure usually involves a long time span. To facilitate widely applicable transportation in the near future, we may  focus on the vehicles and the management system. We may utilize various advanced vehicular technologies to enhance the performance of a transportation system. As being able to accommodate more passengers, we will focus on public transport in this paper.

Autonomous vehicles (AVs) refer to those vehicles capable of driving themselves without human intervention. They can adapt and respond to various situations happened on the roads due to their strong sensing and self-control abilities. They possess many advantages, including fewer traffic collisions, increased roadway capacity, alleviation of parking scarcity, and reduction in car theft \cite{AVWiki}. Many companies have already invested in the AV industry. Google launched the Self-Driving Car Project in 2011 \cite{google}. Automotive manufacturers, like BMW \cite{BMW} and Mercedes-Benz \cite{benz}, are inventing their AVs for mass production. Moreover, related law has been passed in Nevada, Florida, California, and Michigan to allow AVs driving on public roads \cite{AVlegal}. All these show that the AV is a promising technology and it will become one of the major elements in the future transportation system.

Recently a new AV-based public transportation system was proposed in \cite{confversion}, where AVs are utilized as the conveyances to carry passengers in a city. The system can be made automatic and adaptive to transportation requests, with the consideration of traffic conditions. There is a control center (e.g., a central computing facility) for managing and scheduling the AVs, and making other related decisions for the system. Through adaptive scheduling, the optimal routes with the minimum operational cost can be determined for the AVs. With advanced vehicular communications technologies, like vehicular ad-hoc networks \cite{vanet_survey},  the AVs become connected able to exchange information with the control center.
Due to the unmanned nature, the AVs can coordinate with each other, and loyally and accurately follow the instructions from the control center. The system allows ride sharing; passengers may share their rides with other people, as buses. Moreover, the system supports point-to-point services; a passenger can specify the pick-up and drop-off locations with various time requirements, as taxis. Hence the AV public transportation system provides new forms of transportation services with high efficiency, high flexibility, and low cost.

Increasing market competition can lead to better prices, quality of services, and information for consumers with more choices.
For the benefit of its residents, a smart city may allow its AV public transportation system to be operated by multiple service providers, that results in a multi-tenant system. In such a system, there is still one control center for information gathering and central decision making, but the AVs are divided into groups, each of which is governed by an independent service provider (or called an operator). While the technical scheduling and admission control mechanisms have been fully addressed in \cite{confversion} and \cite{admissioncontrol}, there are no clear rules about how to set the service charges for the system, especially the multi-tenant system. In this paper, we aim to study the pricing issue for the multi-tenant system and propose a combinatorial auction-based pricing scheme, where the service charges are determined through the Vickrey-Clarke-Groves (VCG) mechanism \cite{Vickrey,Clarke,Groves}. Our scheme can maximize the social welfare and it is strategy-proof such that all bidders (i.e., the operators) have no incentive to lie about their private information in the auction.

Auctions have been widely used to settle prices for many engineering systems. In \cite{Auction_Elec}, the uniform pricing and pay-as-bid pricing were settled with a multi-unit auction in a short-term electricity market. \cite{doublelayer} constructed a double auction to determine the price with quantity for energy trading in a vehicle-to-grid system. In \cite{Auction_internet}, a multibid auction scheme was designed to allow users to compete for bandwidth in telecommunication networks. \cite{Auction_cognitive} designed a multi-unit sequential sealed-bid first-price auction for spectrum trading in cognitive radio networks.
Due to its desirable properties, VCG is a powerful mechanism in mechanism design. Samadi \textit{et al.} proposed a VCG-based mechanism to maximize the aggregate utility of all users and minimize the power generation cost for demand side management in the smart grid \cite{VCG_smartgrid}. \cite{VCG_Kelly} proposed a VCG-based resource allocation mechanism with the Kelly mechanism so that efficiency is attained at Nash equilibrium points. In \cite{VCG_cloud}, the VCG mechanism was utilized to procure resources and select cloud vendors in cloud computing.
We can see that, with many successful applications, a VCG-auction-based mechanism can be utilized to address the pricing issue of the multi-tenant AV public transportation system. 


The rest of this paper is organized as follows. We present the system model in Section \ref{sec:model}. We construct the combinatorial auction for the system  in Section \ref{sec:auction} and Section \ref{sec:charging} discussed the VCG-based charging mechanism. We verify the analytical results and evaluate \textcolor{black}{the}  system performance in Section \ref{sec:simulation}. Finally we conclude this paper in Section \ref{sec:conclusion}.



\section{System Model}\label{sec:model}

In this section, we first introduce the model of the AV public transportation system and then pinpoint its pricing process.

\subsection{Autonomous Vehicle Public Transportation System}\label{subsec:model}

The AV public transportation system was firstly proposed in \cite{confversion}, in which the control center coordinates a fleet of AVs to provide transportation services. Customers submit transportation requests to the control center with necessary information, including pickup and dropoff locations, service times, number of passengers, etc. After collecting a number of transportation requests, the control center assigns appropriate AVs to serve the requests. When serving the requests, the AVs may be carrying the passengers from other requests and this realizes ride-sharing. Due to the unmanned nature of the vehicles, we need to determine the schedules and routes for the designated AVs in order to admit the transportation requests. This can be accomplished by solving the scheduling problem, which was fully addressed in \cite{confversion,admissioncontrol}. Moreover, not every request can be served because the requirements stated in the request may not simultaneously be satisfied by any vehicle available. We need to perform admission control to screen out those infeasible requests for effective scheduling. The admission control problem is given as a bi-level optimization, in which the scheduling problem is considered as a constraint. The details can be found in \cite{admissioncontrol}.

\begin{figure}[!t]
\centering
\hspace{-0.3cm}
\includegraphics[width=3.3in]{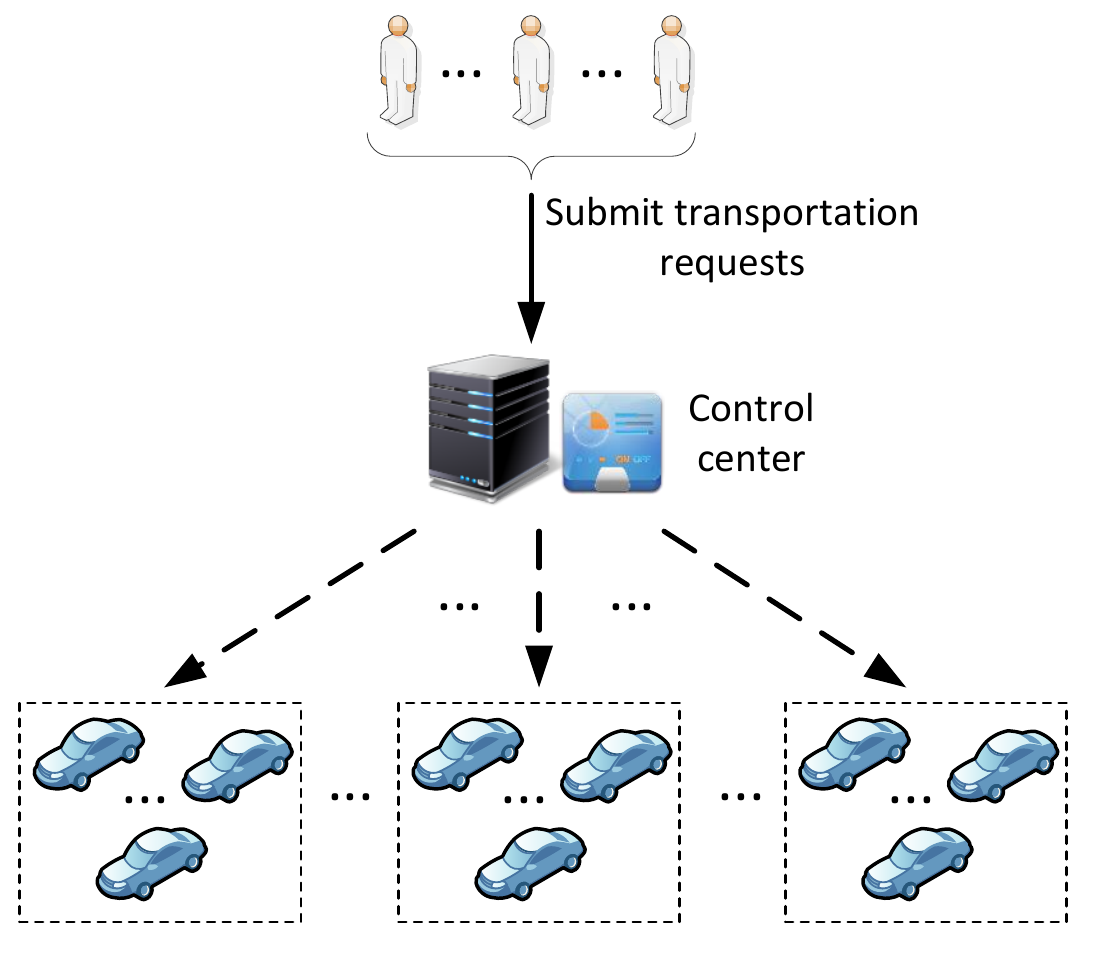}
\caption{Multi-tenant AV public transportation system.} 
\label{fig:system}
\end{figure}

\cite{confversion} and \cite{admissioncontrol} focused on the technical aspects of the system, in which \textcolor{black}{a} monopoly is simply assumed to operate the system. In this paper, we generalize the operation model by supporting multi-tenancy. In a multi-tenant AV public transportation system, multiple fleets of AVs are operated and managed by independent operators.\footnote{For implementation, we can consider that the control center is operated in a cloud \cite{cloud} and the AV operators are also engaged in the same or different clouds.} The vehicles under the same tenant are cooperative but those belonging to different tenants are not. For the transparency of information and benefits of customers, we design a central ``market'' to convene all the received transportation requests. This market is run by a broker who aims to match the service ``buyers'' and ``sellers''. In other words, upon a service request $r$ has been submitted by a customer, the broker tries to see if there is any operator able to provide the service. The operators will assess $r$ by checking its admissibility and the corresponding operational cost\footnote{The operational cost for serving a request is determined through admission control and scheduling and the methodologies discussed \cite{admissioncontrol} can still be carried through in the multi-tenant system.} and this will result in one of the following cases: (1) no operator can admit $r$; (2) only one operator decides to admit $r$; and (3) more than one operators intend to ``sell'' their service for $r$. Cases 1 and 2 are simple. For Case 1, we may simply ignore $r$ or keep it on hold for later consideration. For Case 2, the winning operator will schedule an AV to serve $r$ based on the result of scheduling. So the complication mainly comes from Case 3 where the broker should decide one appropriate operator to complete the deal. Hence we need a mechanism to select one of the competing operators based on some criteria and designing such an mechanism will be the focus of this paper.

As discussed in \cite{confversion,admissioncontrol}, in a monopolistic system, there is a control center responsible for information gathering and decision making. In fact, in a multi-tenant system, the control center can act as the broker to arbitrate admission competition, i.e., Case 3 above. 
The system is depicted in Fig. \ref{fig:system}. Customers submit transportation requests to the control center through any appropriate means (e.g., mobile apps and phone calls) and each submitted request contains all the necessary information about the service, e.g., service starting point and destination, service start time and end time, the number of seats occupied, etc. 
When receiving a transportation request from a customer, the control center disseminates the information to the operators. Based on the methods discussed in \cite{confversion,admissioncontrol}, each operator can independently assess the request according to the conditions of its governed vehicles,  estimate the operational cost for providing the service, and decide which vehicle in its fleet is carrying out the service.
Based on the estimated operational cost, the operator will list its proposed service charge at the control center. 

\subsection{Pricing Process} \label{subsec:pricingprocess}

\begin{figure}[!t]
\centering
\hspace{-0.3cm}
\includegraphics[width=3.5in]{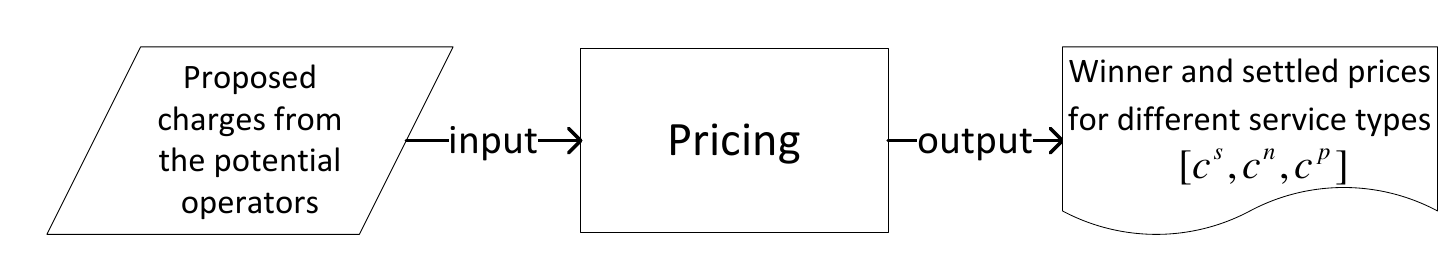} 
\caption{The pricing process.} 
\label{fig:pricing}
\end{figure}

The pricing process of the multi-tenant system is manipulated by the control center  and its schematic can be found in Fig. \ref{fig:pricing}. 
\textcolor{black}{Based on \cite{admissioncontrol}, the system operates in an interval basis, where the pricing process takes place in the duty assignment sub-interval, following the schuduling and admission control processes.}
After collecting the proposed charges from all potential service providing operators, the control center performs the pricing process and determines the winner(s) of the competition with a set of settled prices for the service.\footnote{The reason why returning a set of settled prices instead of just one price will be explained next.}
Suppose that all vehicles are homogeneous in terms of capacity and each has $Q$ seats. Consider that the number of seats required for request $r$ is $q_r$.\footnote{We call a person who submits the request a customer, and who takes the ride a passenger. For request $r$, there is one customer and $q_r$ passengers. A customer is not necessarily one of the passengers.} Assuming $1\leq q_r\leq Q$, 
normally the customer is interested in three types of services:
\begin{itemize}
	\item Splittable service: The service is allowed to be supported by more than one vehicle. In this way, the passengers of the request will be split into multiple groups and travel on separate vehicles.  Yet we do not exclude the situation that all passengers may stay along in the same vehicle in some cases. We denote the charge of the splittable service by $c^s$.
	\item Non-splittable service: The service is accomplished by one single vehicle only. This accounts for the situation that the passengers prefer to stay along together during the journey. The charge of the non-splittable service is denoted by $c^n$.
	\item Private service: The service is solely supported by one vehicle and no other passengers of other requests are allowed to stay in the vehicle during the time serving $r$ even though there are not enough passengers to occupy the whole vehicle. In other words, the passengers desire to hire a private vehicle for the travel. We denote the charge of the private service by $c^p$.
\end{itemize}
With the values of $c^s$, $c^n$, and $c^p$, the customer can then make the final decision of adopting which service type based on its own preference. 
 For example, if $c^s$ is much lower than $c^n$ and $c^p$, the customer may make a cost-effective decision and go to the splittable service. If $c^p$ has a similar value as of $c^s$ and $c^n$, the customer may choose the private service for privacy.
Note that the customer does not need to take care of which operator(s) or vehicle(s) will eventually provide the service. The pricing process will consolidate all the proposed charges from the operators and determine the best service prices for the customer,  
who will only see the settled prices of the three service types. 
Based on the customer's decision, the control center will match the operator(s) for the customer. The involved operator(s) will then carry out the plan for providing the service according to the schedules resulted from the prior scheduling process.

\textcolor{black}{Note that each pricing process is supposed to determine the service charges for one service request. Since the system operates in an interval basis, there may be multiple service requests being handled at the same time in each interval. In fact, after scheduling and admission control \cite{admissioncontrol}, the service requests which are being processed are no longer dependent and thus we can simply apply the pricing process to each of these service requests and compute the results in parallel simultaneously.}

\section{Combinatorial Auction} \label{sec:auction}

We model the pricing process with combinatorial auctions, one for each service type.\footnote{Note that the design of the combinatorial auction in this section and the charging mechanism in the next section do not rely on the the technical details of  \cite{confversion} and \cite{admissioncontrol}.} In a combinatorial auction \cite{CombAuctionSurvey}, there are a number of discrete items for sale or asking for sellers. Bidders can place bids on combinations of the items instead of individual items. An auctioneer facilitates the auction and decides the winners based on the submitted bids. If the bidders are buyers, for the benefit of the owner of the items, the bidders with the highest bids will win. On the other hand, if the bidders are sellers, for the welfare of the item requester, the bidders with the lowest bids will be the winners. 
For the pricing process, seat occupancies are the items to be traded. For request $r$, we need to assign $q_r$ seats from various AVs to the passengers. We can interpret the customer of request $r$ as the ``buyer'' of the transportation service and it needs $q_r$ seats to be `sold''. Those operators who are interested in ``selling'' their seats are the bidders and they place bids on combinations of seats. The control center acts as the auctioneer to facilitate the auction.

Consider request $r$ with $q_r$ seats required and there are $K$ operators interested in serving $r$.  Let $\mathcal{K}$ be the set of bidders and hence $|\mathcal{K}|=K$. Assume that both the buyer and sellers are rational; for the same service, the customer aims for the lowest price while the operators try to win other bidders for offering the service. 
All vehicles governed by the operators are homogeneous such that their capacities are all identical.\footnote{Since AVs are unmanned, the vehicular specifications should follow some standards which be strictly regulated by the government. Also vehicles of similar size have advantages of simpler control and interaction between vehicles. Moreover, it is likely that the operators adopt similar models of AVs for business. Thus, for simplicity, we assume that all vehicles are homogeneous.} Let $Q$ be the capacity of each vehicle. 
Without loss of generality, we assume 
\begin{align} \label{assumption}
1\leq q_r\leq Q.
\end{align}
As discussed in \cite{confversion}, a request with the number of seats required larger than $Q$ can be considered as multiple requests such that they all require the seat quantities smaller than $Q$.
%
We also assume that all seats are homogeneous; in a vehicle, the passengers only care about the number of seats available, instead of the positions and conditions of the seats. 
We define a set $\mathcal{Q} = \{\{1\},\{1,2\},\ldots,\{1,2,\ldots,m\},\ldots,\{1,2,\ldots,Q\}\}$ and it is the collection of seat combinations for lease, where $\{1,2,\ldots,m\}$ represents a set of $m$ seats. 
%
For each combination $\mathcal{S}\in \mathcal{Q}$, bidder $k\in\mathcal{K}$ has a valuation $v_k(\mathcal{S})$ for letting $\mathcal{S}$. This valuation is the base price for the lease. If the customer pays less than $v_k(\mathcal{S})$, operator $k$ will induce a negative utility for letting $\mathcal{S}$. Recall that bidder $k$ can determine the cost of offering $\mathcal{S}$ from the scheduling and admission control processes. It may set $v_k(\mathcal{S})$ by marking up the cost with a certain guaranteed return.
In the auction, bidder $k$ places a bid $b_k(\mathcal{S})$ for each $\mathcal{S}\in \mathcal{Q}$. Occupying a seat always induces a charge.
This can be realized by making $b_k(\cdot)$ an increasing function with the size of seat combination; for $\mathcal{S},\mathcal{T}\in \mathcal{Q}$, if $|\mathcal{S}|<|\mathcal{T}|$, then $b_k(\mathcal{S})<b_k(\mathcal{T})$.
Since the vehicles may be serving other passengers at the time of auctioning off request $r$, the numbers of seats available in various vehicles can be different. Let $Q_k$ be the number of seats offerable by bidder $k\in \mathcal{K}$. If $\mathcal{S}$ corresponds to the seats with quantity larger than that available, i.e., $|\mathcal{S}|>Q_k$, we can simply set its valuation $v_k(\mathcal{S})$ and bid $b_k(\mathcal{S})$ with sufficiently large numbers such that $k$ will not win $\mathcal{S}$ in the auction. This is equivalent to the fact that $\mathcal{S}$ will never be allocated to $k$.
Bidder $k$ may make more profit by raising $b_k(\mathcal{S})$ higher than $v_k(\mathcal{S})$. In general, we can assume $b_k(\mathcal{S})\geq v_k(\mathcal{S})\geq 0$, for all $\mathcal{S}\in \mathcal{Q}$. 
Moreover,  each bidder only concern about what it will receive in the auction but not the results of other bidders. 

We construct one auction for each service type. Next we investigate how the auctioneer determines the winners of the auctions and we model these decision making processes with integer linear programs. We define the binary variable $x_k(\mathcal{S})$ to indicate if $\mathcal{S}$ is allocated to $k$, i.e., 
\begin{align*}
x_k(\mathcal{S})=\left\{
	\begin{array}{ll}
		1 & \text{if $\mathcal{S}$ is allocated to $k$,}\\
		0 & \text{otherwise.}
	\end{array}
\right.
\end{align*}
For the benefit of the customer, the service should be provided by those operators charging the least. The objective is to minimize the total bids charged by the operators and the corresponding objective function is
\begin{align} \label{objective}
p=\sum_{k\in\mathcal{K}}\sum_{\mathcal{S}\in \mathcal{Q}}b_k(\mathcal{S})x_k(\mathcal{S}).
\end{align}
$p$ is the optimal payment that the customer should pay for the service. For some reasons (explained in Section \ref{sec:charging}), the customer will be charged with a higher amount. We will discuss how to determine the final charge from $p$ in Section \ref{sec:charging}.
\begin{lemma} \label{onebid}
With an increasing concave bidding function, each bidder can at most win with one bid in the combinatorial auction for each service type.
\end{lemma}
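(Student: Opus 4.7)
The plan is an exchange argument: whenever a bidder wins with two distinct combinations, merging them into a single combination of the same total size is both feasible and, thanks to concavity, at least as cheap. Hence the ILP always admits an optimum in which every bidder has at most one winning bid.

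First, I would suppose toward a contradiction that in an optimal allocation $\{x_k^\ast(\cdot)\}$ some bidder $k$ has $x_k^\ast(\mathcal{S}_1)=x_k^\ast(\mathcal{S}_2)=1$ for two distinct $\mathcal{S}_1,\mathcal{S}_2\in\mathcal{Q}$, and set $a=|\mathcal{S}_1|$, $b=|\mathcal{S}_2|$. Because the winners collectively supply the $q_r$ seats of the request and bidder $k$ alone contributes $a+b$ of them, $a+b\le q_r$, and by~(\ref{assumption}) $a+b\le Q$, so the bundle $\mathcal{S}'=\{1,\ldots,a+b\}$ lies in $\mathcal{Q}$. Since those $a+b$ seats are drawn from $k$'s own fleet, $a+b\le Q_k$, so $k$ is also capable of supplying $\mathcal{S}'$.

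Next, I would construct a modified allocation by toggling $\tilde x_k(\mathcal{S}_1)=\tilde x_k(\mathcal{S}_2)=0$ and $\tilde x_k(\mathcal{S}')=1$, leaving every other variable untouched. Bidder $k$ still provides $a+b$ seats in total, so the aggregate seat count delivered to the customer is unchanged and $\{\tilde x_k\}$ is feasible for the ILP; the only change in the objective~(\ref{objective}) is that $b_k(\mathcal{S}_1)+b_k(\mathcal{S}_2)$ is replaced by $b_k(\mathcal{S}')$.

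The crucial step is to bound this replacement using concavity. Because seats are homogeneous, $b_k$ depends on its argument only through cardinality, so it may be viewed as a non-decreasing concave function $\beta_k\colon\{0,1,\ldots,Q\}\to\mathbb{R}_{\ge 0}$ with $\beta_k(0)=0$. Writing the convex-combination identities $a=\tfrac{a}{a+b}(a+b)+\tfrac{b}{a+b}\cdot 0$ and $b=\tfrac{b}{a+b}(a+b)+\tfrac{a}{a+b}\cdot 0$, applying concavity to each, and summing, one obtains $\beta_k(a)+\beta_k(b)\ge \beta_k(a+b)$, so $b_k(\mathcal{S}')\le b_k(\mathcal{S}_1)+b_k(\mathcal{S}_2)$ and the modified allocation attains an objective no larger than the optimum.

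The main obstacle I anticipate is that weak concavity yields only a weak inequality, so a single merge may merely tie, rather than strictly beat, the optimum. I would address this by interpreting the lemma as an existence statement and iterating the merging step: each merge strictly reduces the number of active $x$-variables held by bidder $k$ while preserving optimality, and since that count is finite the procedure terminates at an optimal allocation in which every bidder wins with at most one bid, giving the claim.
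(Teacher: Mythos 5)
Your proof is essentially the paper's own argument: assume a bidder wins two bundles, merge them into a single bundle of the same total cardinality (feasible since the total stays within $q_r\le Q$), and use concavity of the bid function to show the merged bid costs no more, contradicting optimality. The one place you go beyond the paper is in observing that weak concavity yields only $b_k(\mathcal{S}')\le b_k(\mathcal{S}_1)+b_k(\mathcal{S}_2)$ --- the paper asserts a strict inequality, which fails for, e.g., linear (still concave) bids --- and repairing this by reading the lemma as an existence statement and iterating merges until some optimum with at most one winning bid per bidder is reached; that extra care is legitimate and arguably needed for the statement to hold as written.
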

\begin{proof}
Suppose bidder $k$ wins two bids, e.g., for $\mathcal{S},\mathcal{T}\in\mathcal{Q}$, in the auction. It is impossible to have $|\mathcal{S}|+|\mathcal{T}|>Q$. By \eqref{assumption}, the customer will never request a  service which provides unnecessary seats with avoidable extra cost. Now we focus on $\mathcal{S}$ and $\mathcal{T}$ with $|\mathcal{S}|+|\mathcal{T}|\leq Q$. Recall that seats are unidentifiable while the elements in $\mathcal{Q}$ are identified with the seat quantity only. So we will always find $\mathcal{U}$ in $\mathcal{Q}$ such that $|\mathcal{U}|=|\mathcal{S}|+|\mathcal{T}|$. Since $b_k(\cdot)$ is an increasing concave function, $b_k(\mathcal{U}) < b_k(\mathcal{S})+b_k(\mathcal{T})$. As we are minimizing \eqref{objective}, the customer will adopt $\mathcal{U}$ instead of the combination of $\mathcal{S}$ and $\mathcal{T}$. This induces a contradiction. Hence, each bidder can at most win with one bid.
\end{proof}
In general, the marginal cost of seat occupancy is diminishing. When a vehicle has been assigned to traverse from one place to another, the extra cost of holding one more passenger is rather limited if the route does not need to be altered. In this case, $b_k(\cdot)$ is an increasing concave function, i.e., for $m\leq Q-2$, $\mathcal{S}=\{1,\ldots,m\},\mathcal{S}'=\{1,\ldots,m,m+1\}$, and $\mathcal{S}''=\{1,\ldots,m,m+1,m+2\}$, we have $b_k(\mathcal{S}')-b_k(\mathcal{S})\geq b_k(\mathcal{S}'')-b_k(\mathcal{S}')$.

However, if the bidding function is not concave, Lemma \ref{onebid} will not not hold. To ensure a seat not be let more than once at the same time, we explicitly impose the following constraint:
\begin{align} \label{lemmacons}
	\sum_{\mathcal{S}\in\mathcal{Q}} x_k(\mathcal{S}) \leq 1, \forall k\in \mathcal{K}.
\end{align}
We will utilize \eqref{lemmacons} to formulate the optimization problems for the splittable, non-splittable, and private services.

\subsection{Splittable Service}
For the splittable service, once all bids have been submitted, the auctioneer determines the winner(s) by solving the following optimization problem:
\begin{subequations} \label{WDP_ps}
\begin{align}
\text{minimize}\quad 	& \sum_{k\in\mathcal{K}}\sum_{\mathcal{S}\in \mathcal{Q}}b_k(\mathcal{S})x_k(\mathcal{S}) \label{ps_obj}\\
\text{subject to}\quad 
& \sum_{\mathcal{S}\in\mathcal{Q}} x_k(\mathcal{S}) \leq 1, \forall k\in \mathcal{K}\label{ps_con1}\\
& \sum_{k\in\mathcal{K}}\sum_{\mathcal{S}\in \mathcal{Q}}|\mathcal{S}|x_k(\mathcal{S})\geq q_r, \label{ps_con2}\\
& x_k(\mathcal{S}) =\{0,1\},\forall \mathcal{S}\in\mathcal{Q}, k\in \mathcal{K} \label{ps_con3}.
\end{align}
\end{subequations}
Constraint \eqref{ps_con2} guarantees that the offers have enough seats as stated in the transportation request while Constraint \eqref{ps_con3} confines that $x_k(\mathcal{S})$ is a binary variable. Note that \eqref{WDP_ps} may result in an offer that all seats come from the same vehicle as long as the offer incurs the lowest bid in total.
\begin{lemma}
For the splittable service, the number of seats allocated in the auction must be equal to that stated in the transportation request.
\end{lemma}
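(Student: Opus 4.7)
The plan is a proof by contradiction. I would assume that at an optimum of \eqref{WDP_ps} the seat constraint \eqref{ps_con2} is slack, and then construct another feasible solution with strictly smaller objective, contradicting optimality. The integrality of seat counts will be essential: slackness forces the excess to be at least one full seat, which gives me enough room to shave off a chunk of allocated capacity.

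First, I would fix notation. By constraint \eqref{ps_con1}, each bidder wins at most one combination; let $\mathcal{S}_k$ denote the winning combination of bidder $k$ whenever $k$ wins, and let $W\subseteq\mathcal{K}$ be the set of winners. Under the contradiction hypothesis, the excess $e \defequiv \sum_{k\in W}|\mathcal{S}_k| - q_r$ satisfies $e\geq 1$.

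Next, I would pick any winner $k^*\in W$ and split into two cases based on how its allocation compares to the excess. If $|\mathcal{S}_{k^*}| > e$, the smaller set $\mathcal{S}' = \{1,\ldots,|\mathcal{S}_{k^*}| - e\}$ lies in $\mathcal{Q}$ and still has positive cardinality; replacing $\mathcal{S}_{k^*}$ by $\mathcal{S}'$ (i.e.\ setting $x_{k^*}(\mathcal{S}_{k^*})=0$ and $x_{k^*}(\mathcal{S}')=1$) keeps \eqref{ps_con2} satisfied, now with equality, and, because $b_{k^*}$ is strictly increasing in the size of the seat combination, strictly lowers the objective \eqref{ps_obj}. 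Otherwise $|\mathcal{S}_{k^*}| \leq e$, and I would simply drop $k^*$ from the allocation by setting $x_{k^*}(\mathcal{S}_{k^*}) = 0$: the remaining total seats is at least $q_r + e - |\mathcal{S}_{k^*}| \geq q_r$, so \eqref{ps_con2} still holds, and the objective strictly decreases by $b_{k^*}(\mathcal{S}_{k^*})$.

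The main subtle point, and arguably the only place the argument could slip, is ensuring the strict drop in the second case: I would use the earlier stipulation that occupying a seat always induces a charge together with the strict monotonicity of $b_k(\cdot)$ in the combination size to conclude $b_k(\mathcal{S})>0$ for every $\mathcal{S}\in\mathcal{Q}$. With that, either branch of the case analysis delivers a feasible solution strictly better than the assumed optimum, giving the required contradiction, so \eqref{ps_con2} must be tight.
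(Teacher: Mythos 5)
Your proof is correct and follows essentially the same route as the paper's: assume the seat constraint \eqref{ps_con2} is slack at an optimum and exhibit a strictly cheaper feasible allocation using the strict monotonicity of $b_k(\cdot)$ in the combination size, contradicting optimality. Your two-case analysis (shrink one winner's combination when $|\mathcal{S}_{k^*}|>e$, otherwise drop that winner outright) is in fact slightly more careful than the paper's construction, which only shrinks winning combinations to reach equality and silently skips the corner case where the excess cannot be absorbed that way (e.g.\ several winners each already holding a singleton), so your explicit appeal to $b_k(\mathcal{S})>0$ to justify the strict decrease when dropping a winner closes a small gap rather than opening one.
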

\begin{proof}
For request $r$, the number of seats required is $q_r$. Suppose that $\tilde{x}=[\tilde{x}_k(\mathcal{S})]_{k\in\mathcal{K},\mathcal{S}\in\mathcal{Q}}$ is a global minimum  of Problem \eqref{WDP_ps} such that 
\begin{align*}
\sum_{k\in\mathcal{K}}\sum_{\mathcal{S}\in \mathcal{Q}}|\mathcal{S}|\tilde{x}_k(\mathcal{S})> q_r.
\end{align*}
We construct another $x^*$ from  $\tilde{x}$ as follows: 
\begin{itemize}
\item For each $k\in\mathcal{K}$, if $\sum_{\mathcal{S}}\tilde{x}_k(\mathcal{S})=0$, then $\sum_{\mathcal{S}}x^*_k(\mathcal{S})=0$;
\item Among those $k$ and $\mathcal{S}$ with $\tilde{x}_k(\mathcal{S})=1$, for some of the $k$, we set $x^*_k(\mathcal{T})=1$  on $\mathcal{T}$ instead of $\mathcal{S}$, where $\mathcal{T}$ is a smaller set with $|\mathcal{T}|<|\mathcal{S}|$, while for the other $k$, we set $x^*_k(\mathcal{S})=1$. This assignment of $x^*$ needs to satisfy
\begin{align} \label{equalcon3}
\sum_{k\in\mathcal{K}}\sum_{\mathcal{S}\in \mathcal{Q}}|\mathcal{S}|x^*_k(\mathcal{S})= q_r.
\end{align}
\end{itemize}
The above construction of $x^*$ ensures that \eqref{ps_con1} and \eqref{ps_con3} are satisfied. Hence $x^*$ is also a feasible solution of \eqref{WDP_ps}. Since $b_k(\cdot)$ is an increasing function and $x^*$ is set on some smaller $\mathcal{T}$, $\sum_{k}\sum_{\mathcal{S}}b_k(\mathcal{S})x^*_k(\mathcal{S})$ is smaller than $\sum_{k}\sum_{\mathcal{S}}b_k(\mathcal{S})\tilde{x}_k(\mathcal{S})$. Hence $\tilde{x}$ cannot be a global minimum and this induces a contradiction. Therefore. a global optimum $x^*$ of \eqref{WDP_ps} should satisfy \eqref{equalcon3}, which is equivalent to the statement of the lemma. 
\end{proof}
From this lemma, we can make \eqref{ps_con2} an equality so that the feasible region of \eqref{WDP_ps} becomes smaller. However, in general, most existing integer programming solvers tackle problems with inequality more effectively than with equality. That is the main reason that we relax  \eqref{ps_con2} as an inequality.

\subsection{Non-splittable Service}
Similar to the splittable service, the auctioneer determines the winner of the auction for the non-splittable service by addressing the following problem:
\begin{subequations} \label{WDP_pn}
\begin{align}
\text{minimize}\quad 	& \sum_{k\in\mathcal{K}}\sum_{\mathcal{S}\in \mathcal{Q}}b_k(\mathcal{S})x_k(\mathcal{S}) \\
\text{subject to}\quad & \sum_{\mathcal{S}\in \mathcal{Q}}\sum_{k\in\mathcal{K}}x_k(\mathcal{S}) \leq 1, \label{pn_con1}\\
& \sum_{k\in\mathcal{K}}\sum_{\mathcal{S}\in \mathcal{Q}}|\mathcal{S}|x_k(\mathcal{S})\geq q_r,\label{pn_con2}\\
& x_k(\mathcal{S}) =\{0,1\},\forall \mathcal{S}\in\mathcal{Q}, k\in \mathcal{K}.
\end{align}
\end{subequations}
\eqref{WDP_pn} is similar to \eqref{WDP_ps} except \eqref{pn_con1}, which ensures that the offer will be accomplished with at most one subset of seats in $\mathcal{Q}$ from all bidders. In fact, \eqref{pn_con1} implies \eqref{ps_con1}. In this way, the passengers will be not split into multiple vehicles for the travel.

\subsection{Private Service}
For the private service, the offer should be originated from one single vehicle, in which no other customers have been pre-assigned. The setting is similar to that for a non-splittable ride, but the number of seats required is equal to the capacity of a vehicle, i.e., $Q$. The auctioneer can determine the winner with the following problem:
\begin{subequations} \label{WDP_pw}
\begin{align}
\text{minimize}\quad 	& \sum_{k\in\mathcal{K}}\sum_{\mathcal{S}\in \mathcal{Q}}b_k(\mathcal{S})x_k(\mathcal{S})\\
\text{subject to}\quad & \sum_{\mathcal{S}\in \mathcal{Q}}\sum_{k\in\mathcal{K}}x_k(\mathcal{S}) \leq 1,\label{pw_con1} \\
& \sum_{k\in\mathcal{K}}\sum_{\mathcal{S}\in \mathcal{Q}}|\mathcal{S}|x_k(\mathcal{S})\geq Q,\label{pw_con2}\\
& x_k(\mathcal{S}) =\{0,1\},\forall \mathcal{S}\in\mathcal{Q}, k\in \mathcal{K}.
\end{align}
\end{subequations}
\eqref{WDP_pw} is similar to to \eqref{WDP_pn} by replacing $q_r$ with $Q$ in \eqref{pw_con2}. It ensures the offer will be made to $\mathcal{S}$ from an empty vehicle. 


\subsection{Results of the Auctions}

In combinatorial auctions, Problems \eqref{WDP_ps}, \eqref{WDP_pn}, and \eqref{WDP_pw} belong to the class of Winner Determination Problem \cite{CombAuctionSurvey}, but the formulations given above are dedicated to the multi-tenant AV system. Let $p^s$, $p^n$, and $p^p$ be the optimal objective function values of Problems \eqref{WDP_ps}, \eqref{WDP_pn}, and \eqref{WDP_pw}, respectively.

\begin{theorem} \label{lemma:pricerank}
The total bid required to accomplish the splittable service is the lowest while that for the private service is the highest, i.e., $p^s\leq p^n\leq p^p$.
\end{theorem}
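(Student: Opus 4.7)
The plan is to prove the two inequalities $p^s \leq p^n$ and $p^n \leq p^p$ separately, in each case by showing that the feasible region of the right-hand problem is contained in the feasible region of the left-hand problem. Since all three problems share the same objective function \eqref{ps_obj}, any such inclusion immediately yields the desired inequality on the optimal values.

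For $p^s \leq p^n$, I would take any feasible point $x$ of \eqref{WDP_pn} and check the constraints of \eqref{WDP_ps}. Constraint \eqref{pn_con2} is literally \eqref{ps_con2}, and the binary constraint is identical. The only nontrivial point is \eqref{ps_con1}: since \eqref{pn_con1} enforces $\sum_{k\in\mathcal{K}}\sum_{\mathcal{S}\in\mathcal{Q}} x_k(\mathcal{S}) \leq 1$, the per-bidder sum $\sum_{\mathcal{S}\in\mathcal{Q}} x_k(\mathcal{S})$ is bounded by $1$ for every $k$, so \eqref{ps_con1} holds. Hence every feasible point of \eqref{WDP_pn} is feasible for \eqref{WDP_ps}, and minimizing the same objective over a larger set can only lower (or equal) its value: $p^s \leq p^n$.

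For $p^n \leq p^p$, I would similarly take any feasible $x$ of \eqref{WDP_pw} and show it is feasible for \eqref{WDP_pn}. Constraint \eqref{pw_con1} is identical to \eqref{pn_con1}, the binary constraint carries over, and the key point is \eqref{pw_con2} versus \eqref{pn_con2}: since \eqref{pw_con2} gives $\sum_{k}\sum_{\mathcal{S}} |\mathcal{S}|\,x_k(\mathcal{S}) \geq Q$ and we have $q_r \leq Q$ by \eqref{assumption}, the same sum is also $\geq q_r$, establishing \eqref{pn_con2}. Thus every feasible point of \eqref{WDP_pw} is feasible for \eqref{WDP_pn}, giving $p^n \leq p^p$.

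There is no real obstacle here; the entire argument is a containment of feasible sets driven by the structural relationships \eqref{pn_con1} $\Rightarrow$ \eqref{ps_con1} and $Q \geq q_r$. The only pitfall to watch for is to make sure both problems are feasible so the minima are well defined (otherwise the inequalities become vacuous in the usual $+\infty$ convention); this is implicit in the setup of the auction, since we assume there are enough bidders and large-enough $v_k,b_k$ on oversized sets to rule out infeasibility in the relevant cases. Combining the two inequalities gives $p^s \leq p^n \leq p^p$, which is the claim.
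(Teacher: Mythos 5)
Your proof is correct and follows exactly the paper's argument: both inequalities are obtained by showing the feasible region of each problem contains that of the next (via \eqref{pn_con1} implying \eqref{ps_con1}, and $Q \geq q_r$ making \eqref{pw_con2} stronger than \eqref{pn_con2}), then minimizing the common objective over nested sets. Your added remarks on the per-bidder sum and on feasibility are slightly more explicit than the paper's, but the route is the same.
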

\begin{proof}
$p^s$, $p^n$, and $p^p$ are the objective function values of the global optimums of Problems \eqref{WDP_ps}, \eqref{WDP_pn}, and \eqref{WDP_pw}, respectively. These minimization problems share an identical objective function. The result can be proved if we can show that the feasible region of \eqref{WDP_ps} embeds that of \eqref{WDP_pn}, which in turn embeds the feasible region of \eqref{WDP_pw}.

For \eqref{WDP_ps} and \eqref{WDP_pn}, their constraints are the same except \eqref{ps_con1} and \eqref{pn_con1}. It is not difficult to see that \eqref{pn_con1} is a special case of \eqref{ps_con1}. Hence the feasible region of \eqref{WDP_ps} embeds that of \eqref{WDP_pn}.

Similarly, for \eqref{WDP_pn} and \eqref{WDP_pw}, their constraints are the same except \eqref{pn_con2} and \eqref{pw_con2}. It is trivial to see that \eqref{pw_con2} is a special case of \eqref{pn_con2}. Thus the feasible region of \eqref{WDP_pn} embeds that of \eqref{WDP_pw}.
\end{proof}

Theorem \ref{lemma:pricerank} matches our intuition in the quality of service. If we desire to hire a private vehicle, the charge should be higher than that for a shared ride. If we do not mind to have the passengers being allocated on different vehicles, the charge should be lower than that of having all passengers reside on the same vehicle. It is because a splittable ride provides higher flexibility for the operators to schedule their vehicles with lower operational cost.

Note that in \eqref{WDP_ps}, \eqref{WDP_pn}, and \eqref{WDP_pw}, the winners of the auction are determined based on the submitted bids $b_k(\mathcal{S})$. Mathematically, they are constant and they are  solely assigned by the respective bidders. Thus the values of $p^s$, $p^n$, and $p^p$ depend on $b_k(\mathcal{S})$. We cannot directly set $p^s$, $p^n$, and $p^p$ to be the final charges because we have no way to prevent the bidders from manipulating their bids for possible higher profit.  If $b_k(\mathcal{S})$ is replaced by the true valuation $v_k(\mathcal{S})$ in \eqref{WDP_ps}, \eqref{WDP_pn}, and \eqref{WDP_pw}, the corresponding $p^s$, $p^n$, and $p^p$ will provide lower bounds on the payments that the customer needs to pay, as no bidder $k$  bids below $v_k(\mathcal{S})$. Apparently, it seems that the auctioneer cannot strike the payments toward these lower bounds for the three types of services. In the next section, we will design a charging mechanism to address this issue.

\section{Charging Mechanism} \label{sec:charging}

For \eqref{WDP_ps}, \eqref{WDP_pn}, and \eqref{WDP_pw}, the solutions are said to be economically efficient if the submitted bids are equal to the corresponding true valuations, i.e., $b_k(\mathcal{S})=v_k(\mathcal{S})$ for all $\mathcal{S}$. Economic efficiency provides the best allocation of seats for the benefit of the customer and an economically efficient allocation ensures that no Pareto improvement can be further achieved.
However, $b_k(\mathcal{S})$ is private information with respect to each bidder and the auctioneer cannot assert that the bidders have submitted the true valuations as the bids. A bidder may set $b_k(\mathcal{S})$ higher than $v_k(\mathcal{S})$ for better return but this will also lower its chance of winning the auction as another bidder may have set its bid lower than $b_k(\mathcal{S})$. In order to let the bidders behave truthfully, we adopt the VCG mechanism to determine the amount of payment that the customer should be charged for accepting the offer made from \eqref{WDP_ps}, \eqref{WDP_pn}, or \eqref{WDP_pw}.

\subsection{\textcolor{black}{VCG Mechanism}}

We demonstrate the VCG-based charging mechanism for the splittable service based on Problem \eqref{WDP_ps}, as an example, in the sequel. For the non-splittable and private services, the implementations are similar. 
Consider that bidder $k$ may not bid with the true valuation for a particular $\mathcal{S}$ and announce $b_k(\mathcal{S})\geq v_k(\mathcal{S})$. 
Let $\mathcal{X}$ be the feasible region of \eqref{WDP_ps}, $x^*$ an optimal solution of \eqref{WDP_ps}, and $p^*$ be its objective function value, i.e., 
\begin{align*}
	x^* &= \arg\min_x\{\sum_{k\in\mathcal{K}}\sum_{\mathcal{S}\in \mathcal{Q}}b_k(\mathcal{S})x_k(\mathcal{S})|x\in\mathcal{X}\},\\
	p^*&=\inf\{\sum_{k\in\mathcal{K}}\sum_{\mathcal{S}\in \mathcal{Q}}b_k(\mathcal{S})x_k(\mathcal{S})|x\in\mathcal{X}\}.
\end{align*}
Define 
\begin{align} \label{pminusk}
	p_{-k}^*\triangleq\inf\{&\sum_{l\in\mathcal{K}\setminus k}\sum_{\mathcal{S}\in \mathcal{Q}}b_l(\mathcal{S})x_l(\mathcal{S})|\sum_{\mathcal{S}\in \mathcal{Q}}x_l(\mathcal{S}) \leq 1, \forall l\in\mathcal{K}\setminus k,\nonumber \\
	&\sum_{l\in\mathcal{K}\setminus k}\sum_{\mathcal{S}\in \mathcal{Q}}|\mathcal{S}|x_l(\mathcal{S})\geq q_r, \nonumber\\
	&x_l(\mathcal{S}) =\{0,1\},\forall \mathcal{S}\in\mathcal{Q}, l\in \mathcal{K}\setminus k\}.
\end{align}
Note that the problem in \eqref{pminusk} is different from \eqref{WDP_ps} as $k$ is excluded from $\mathcal{K}$.
$p_{-k}^*$ is the ``welfare'' of the other bidders in the absence of bidder $k$. In other words, it is the result of the auction when bidder $k$ is absent.
Let $\bold{b}_k=[b_k(\mathcal{S})]_{\mathcal{S}\in\mathcal{G}}$ and $\bold{b}_{-k}=[b_l(\mathcal{S})]_{\mathcal{S}\in\mathcal{G},l\in\mathcal{K}\setminus k}$ be the set of bids submitted by $k$ and those submitted by all other bidders other than $k$, respectively, for all the seat combinations.
The VCG mechanism imposes that each bidder should charge the amount of ``damage'' it introduces to the community, i.e., the set of all operators.
We denote the total charge received by $k$ of providing the service based on the submitted bids by $c_k(\bold{b}_{-k})$. For each $k$, we set $c_k(\bold{b}_{-k})$ as the following:
\begin{align}
c_k(\bold{b}_{-k}) \triangleq p_{-k}^* - \sum_{l\in\mathcal{K}\setminus k}\sum_{\mathcal{S}\in \mathcal{Q}}b_l(\mathcal{S})x_l^*(\mathcal{S}). \label{totalcharge}
\end{align}
$c_k(\bold{b}_{-k})$ is interpreted as the ``social charge'' of $k$, which is equivalent to the difference of total welfare of other bidders without $k$'s involvement and that with $k$'s involvement.

Let $\bold{b}=(\bold{b}_k,\bold{b}_{-k})$ and consider the utility $u_k(\bold{b})$ of bidder $k$ as $u_k(\bold{b})=c_k(\text{b})-v_k(x(\text{b}))$, where $c_k(\text{b})$ is the charge received by $k$ and $v_k(x(\text{b}))$ is the sum of valuation of the service provided by $k$ based on $x\in\mathcal{X}$. So the utility is defined as the difference between the ``revenue'' and ``cost''. Each bidder $k$ tries to maximize its utility by manipulating its declared bids, as
\begin{align}
	\max_{\bold{b}_k} \left(c_k(\bold{b})-v_k(x(\bold{b}))\right).\label{utility}
\end{align}
Social welfare is the sum of the utilities of all bidders expressed as $\sum_{k}u_k(\bold{b})=\sum_{k}c_k(\text{b})-v_k(x(\text{b}))$.
\begin{theorem} \label{theorem:social}
Minimizing the total bid received by the operators, i.e., solving \eqref{WDP_ps}, \eqref{WDP_pn}, or \eqref{WDP_pw}, can maximize the social welfare if and only if all bidders submit their bids as their true valuations.
\end{theorem}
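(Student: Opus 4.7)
The plan is to follow the classical VCG truthfulness argument. I would first substitute the charge rule \eqref{totalcharge} into the social welfare expression $W(\mathbf{b}) = \sum_k u_k(\mathbf{b})$ and simplify. By Lemma \ref{onebid}, each bidder wins at most one bid, so $\sum_l b_l(x_l^*) = p^*$, whence $\sum_k \sum_{l \neq k} b_l(x_l^*) = (K-1) p^*$ and
\begin{align*}
    W(\mathbf{b}) = \sum_{k \in \mathcal{K}} p_{-k}^*(\mathbf{b}_{-k}) - (K-1) p^*(\mathbf{b}) - \sum_{k \in \mathcal{K}} v_k(x^*(\mathbf{b})).
\end{align*}
The crucial feature is that $p_{-k}^*$ depends only on the bids of operators other than $k$ and is therefore constant from $k$'s point of view.

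For the sufficiency direction, assume $b_k(\mathcal{S}) = v_k(\mathcal{S})$ for every $k$ and $\mathcal{S}$. Then $\sum_k v_k(x^*) = \sum_k b_k(x_k^*) = p^*$, so $W = \sum_k p_{-k}^* - K p^*$. Since $\sum_k p_{-k}^*$ does not involve the chosen $x^*$, maximizing $W$ reduces to minimizing $p^*$, which is exactly the WDP objective. Hence the minimizer of the total bid is also a maximizer of social welfare, and by symmetry the same argument handles \eqref{WDP_pn} and \eqref{WDP_pw}.

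For the necessity direction, I would invoke the dominant-strategy property of VCG. Fix $\mathbf{b}_{-k}$ and view $u_k$ as a function of $b_k$. Since $p_{-k}^*$ is constant in $b_k$, we have $u_k = p_{-k}^* - \bigl[\sum_{l \neq k} b_l(x_l^*) + v_k(x_k^*)\bigr]$, and any feasible $x^*$ satisfies $\sum_{l \neq k} b_l(x_l^*) + v_k(x_k^*) \geq \min_{x \in \mathcal{X}} \bigl[\sum_{l \neq k} b_l(x_l) + v_k(x_k)\bigr]$; the lower bound is attained precisely when $x^*$ is the WDP solution for the profile $(v_k, \mathbf{b}_{-k})$, i.e., when $k$ declares $b_k = v_k$. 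Thus each $u_k$ is maximized iff $b_k = v_k$, and maximizing $W = \sum_k u_k$ through the WDP therefore forces every bidder to bid truthfully. The main obstacle is handling this direction cleanly, because $W$ couples all bidders through the shared allocation $x^*$; the key observation that unlocks the argument is the $b_k$-independence of $p_{-k}^*$, which decouples each bidder's incentive problem and lets the per-bidder dominant-strategy conclusion lift to an ``all or nothing'' statement about the entire community.
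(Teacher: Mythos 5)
Your proposal is correct and rests on exactly the same mechanism as the paper's proof: the $\bold{b}_k$-independence of $p_{-k}^*$ reduces each bidder's utility maximization to steering the allocation toward $\min_{x\in\mathcal{X}}\left(\sum_{l\in\mathcal{K}\setminus k}b_l(x)+v_k(x)\right)$, which coincides with the winner determination objective precisely when $b_k=v_k$ (the paper's chain \eqref{utility3}--\eqref{utility4}). Your extra aggregate computation $W=\sum_k p_{-k}^*-Kp^*$ for the sufficiency direction is a harmless repackaging rather than a different route, and your necessity argument is the paper's at the same level of rigor.
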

\begin{proof}
Let $b_k(x)$ be the sum of bids received by $k$ based on $x$.
By \eqref{totalcharge}, \eqref{utility} becomes
\begin{align*}
	\max_{\bold{b}_k}  \left(p_{-k} - \sum_{l\in\mathcal{K}\setminus k}b_l(x(\bold{b}))  -v_k(x(\bold{b}))\right).
\end{align*}
Since $p_{-k}$ is independent of $\bold{b}_k$, bidder $k$ will declare $\bold{b}_k$ based on
\begin{align}
	\max_{\bold{b}_k} \left(- \sum_{l\in\mathcal{K}\setminus k}b_l(x(\bold{b}))  -v_k(x(\bold{b}))\right). \label{utility3}
\end{align}
The only way that bidder $k$ influences the result of \eqref{utility3} is through the choice of $x$. In other words, it will declare $\bold{b}_k$ that leads the auctioneer to $x\in \mathcal{X}$ that solves
\begin{align}
	\min_{x\in \mathcal{X}} \left( \sum_{l\in\mathcal{K}\setminus k}b_l(x)  +v_k(x)\right). \label{utility4}
\end{align}
If $b_k(x)=v_k(x)$, for the splittable service, \eqref{utility4} is equivalent to \eqref{WDP_ps}. 
On the other hand, if \eqref{utility4} is equivalent to \eqref{WDP_ps}, $b_k(x)$ will be equal to $v_k(x)$ for all $k$. This completes the proof for the splittable service.
For the non-splittable (private) service, the proof is similar by comparing \eqref{utility4} with \eqref{WDP_pn} (\eqref{WDP_pw}).
\end{proof}
\begin{corollary}
Telling the truth, i.e., $v_k(x)=b_k(x)$ for all $k$, is a socially optimal strategy.
\end{corollary}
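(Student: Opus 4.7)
The plan is to obtain the corollary as an essentially immediate consequence of Theorem \ref{theorem:social}, together with the observation that what the auctioneer actually does is solve \eqref{WDP_ps}, \eqref{WDP_pn}, or \eqref{WDP_pw}, i.e., minimize the total received bids. First I would restate the setup: the auctioneer's rule is fixed (solve the WDP), each bidder $k$ chooses $\bv{b}_k$ to maximize its own utility \eqref{utility}, and social welfare is $\sum_k u_k(\bv{b})$. The corollary asserts that the truthful profile $b_k(\mathcal{S})=v_k(\mathcal{S})$ for all $k$ and $\mathcal{S}$ is socially optimal, i.e., maximizes $\sum_k u_k(\bv{b})$ among admissible strategy profiles.

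Next I would invoke Theorem \ref{theorem:social}: it gives an ``if and only if'' between the auctioneer's minimization of total bids and the maximization of social welfare, conditional on all bidders bidding their true valuations. The ``if'' direction says that when $b_k = v_k$ for every $k$, solving the WDP simultaneously attains the maximum of $\sum_k u_k(\bv{b})$. Since the auctioneer is in fact solving the WDP, this means the truthful profile realizes the social-welfare maximum, which is exactly what it means for truth-telling to be a socially optimal strategy. The analogous statement carries over to \eqref{WDP_pn} and \eqref{WDP_pw} verbatim, since Theorem \ref{theorem:social} was proved for all three service types.

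The only point that needs a brief justification, and which I expect to be the main (though still mild) obstacle, is closing the loop between ``social welfare is maximized at truth-telling'' and ``truth-telling is socially optimal'' in the game-theoretic sense. Concretely, I would argue that any deviation by some bidder $k$ from $b_k = v_k$ either leaves the chosen allocation $x$ unchanged, in which case all $u_l$ are unchanged and social welfare is the same, or changes $x$ to some $x'$ that is no longer a minimizer of $\sum_l v_l(x)$ over $\mathcal{X}$; in the latter case, $\sum_l v_l(x')>\sum_l v_l(x^*)$, and since $\sum_l u_l(\bv{b})=\sum_l c_l(\bv{b})-\sum_l v_l(x(\bv{b}))$ with the VCG payments $c_l$ designed so that the aggregate is tied to $-\sum_l v_l(x)$ plus terms independent of the allocation, social welfare strictly decreases. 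Thus no deviation from the truthful profile can improve social welfare, completing the argument.
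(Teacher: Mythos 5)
Your proposal is correct and takes essentially the same route as the paper, which states this corollary without any separate proof as an immediate consequence of the ``if'' direction of Theorem \ref{theorem:social}: when every bidder bids its true valuation, the auctioneer's minimization of the total bids coincides with maximizing the social welfare, so the truthful profile is socially optimal. Your closing deviation argument is extra machinery the paper does not invoke, and its key claim is slightly shaky --- the aggregate payment $\sum_l c_l(\bv{b}_{-l})$ is not a function of the allocation plus deviation-independent terms, since $p_{-l}^*$ for $l\neq k$ does depend on $\bv{b}_k$ --- but the corollary does not rest on that paragraph.
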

By contraposition, we immediately have the following corollary:
\begin{corollary} \label{corollary:notoptimal}
If $v_k(x)$ is not equal to $b_k(x)$ \textcolor{black}{for some $k$}, the allocation based on \eqref{WDP_ps}, \eqref{WDP_pn}, or \eqref{WDP_pw} \textcolor{black}{may} not \textcolor{black}{be} socially optimal.
\end{corollary}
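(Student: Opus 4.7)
The plan is to obtain Corollary \ref{corollary:notoptimal} as a direct logical consequence of Theorem \ref{theorem:social}, specifically by contraposing the biconditional established there. Theorem \ref{theorem:social} asserts that minimizing the total bid via \eqref{WDP_ps}, \eqref{WDP_pn}, or \eqref{WDP_pw} maximizes the social welfare $\sum_{k}u_k(\bold{b})$ if and only if every bidder submits $b_k(\mathcal{S})=v_k(\mathcal{S})$. The preceding corollary picks out the ``if'' direction; here I only need the contrapositive of the ``only if'' direction.

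The steps are as follows. First, I would restate the relevant half of Theorem \ref{theorem:social}: social optimality of the WDP allocation $x^\ast$ implies $b_k(x)=v_k(x)$ for every $k\in\mathcal{K}$. Second, I would apply simple contraposition: if there exists some $k$ with $v_k(x)\neq b_k(x)$, then $x^\ast$ fails to be socially optimal, i.e.\ it cannot be guaranteed to maximize $\sum_{k}u_k(\bold{b})$. Third, I would note that the statement is phrased with ``may not'' rather than ``is not'' because in degenerate configurations the WDP allocation produced from the untruthful bid vector $\bold{b}$ could still coincide with a socially optimal allocation by accident (for instance, if the misreport does not alter which constraints of \eqref{utility4} bind at the optimum). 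The hedging word simply absorbs such edge cases without contradicting the underlying contrapositive.

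The argument is identical for all three service types, because Theorem \ref{theorem:social} was proved uniformly across \eqref{WDP_ps}, \eqref{WDP_pn}, and \eqref{WDP_pw}; the only change is substituting the corresponding WDP instance in the statement. The main (and only) obstacle is really a bookkeeping matter: making sure the contrapositive is applied to the correct direction of the ``if and only if'' in Theorem \ref{theorem:social}, and justifying the weakened conclusion ``may not be socially optimal'' rather than ``is not socially optimal.'' No further calculations are needed.
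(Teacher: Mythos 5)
Your proposal is correct and matches the paper's own derivation, which consists entirely of the remark ``By contraposition, we immediately have the following corollary'' following Theorem \ref{theorem:social}. You are in fact slightly more careful than the paper: you identify that the contrapositive must be taken of the ``only if'' direction (social optimality implies truthfulness) rather than of the preceding corollary, and you explicitly account for why the conclusion is hedged as ``may not be'' socially optimal.
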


From \eqref{totalcharge}, we can see that the charge received by $k$ does not depend on its declared bid $\bold{b}_k$, but the private information of all other bidders, i.e., $\bold{b}_{-k}$. In other words, it will not charge with a higher amount if it submits bid $b_k(\mathcal{S})>v_k(\mathcal{S})$. Thus there is no incentive for a rational bidder to announce false information and thus the charging mechanism is incentive compatible. 
This can be realized with the following theorem:
\begin{theorem} 
Submitting  bids as the true valuations is a dominant strategy under the VCG-based charging mechanism.
\end{theorem}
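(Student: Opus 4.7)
The plan is to fix an arbitrary bidder $k\in\mathcal{K}$ together with an arbitrary declaration $\bold{b}_{-k}$ of the other operators, and to prove that $k$'s utility is weakly maximized by declaring the truthful vector $\bold{b}_k=\bold{v}_k$, where $\bold{v}_k=[v_k(\mathcal{S})]_{\mathcal{S}\in\mathcal{Q}}$. Starting from the utility expression $u_k(\bold{b})=c_k(\bold{b}_{-k})-v_k(x^*(\bold{b}))$ and substituting \eqref{totalcharge}, I obtain
\[
u_k(\bold{b}_k,\bold{b}_{-k})=p_{-k}^*-\sum_{l\in\mathcal{K}\setminus k}b_l(x^*(\bold{b}))-v_k(x^*(\bold{b})),
\]
using the shorthand $b_l(x)=\sum_{\mathcal{S}\in\mathcal{Q}}b_l(\mathcal{S})x_l(\mathcal{S})$ from the proof of Theorem \ref{theorem:social}. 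Since $p_{-k}^*$ depends only on $\bold{b}_{-k}$, bidder $k$'s sole influence on its utility is through the allocation $x^*(\bold{b})$ returned by the auctioneer.

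The next step is to reinterpret $k$'s best-response problem. Maximizing the expression above over $\bold{b}_k$ is equivalent to choosing $\bold{b}_k$ so that $x^*(\bold{b})$ minimizes $\sum_{l\ne k}b_l(x)+v_k(x)$ over $x\in\mathcal{X}$, which is exactly \eqref{utility4}. However, the auctioneer's rule is $x^*(\bold{b})\in\arg\min_{x\in\mathcal{X}}\sum_{l\in\mathcal{K}}b_l(x)$. Hence if $k$ reports $\bold{b}_k=\bold{v}_k$, the two objectives coincide as $\sum_{l\in\mathcal{K}}b_l(x)=\sum_{l\ne k}b_l(x)+v_k(x)$, so the auctioneer's optimum $x^*(\bold{v}_k,\bold{b}_{-k})$ simultaneously minimizes $k$'s ideal objective.

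For the converse inequality I would use a feasibility argument. Any alternative report $\bold{b}'_k$ produces an allocation $x^*(\bold{b}'_k,\bold{b}_{-k})$ that is still a feasible point of $\mathcal{X}$, so
\[
\sum_{l\ne k}b_l(x^*(\bold{b}'_k,\bold{b}_{-k}))+v_k(x^*(\bold{b}'_k,\bold{b}_{-k}))\ge\sum_{l\ne k}b_l(x^*(\bold{v}_k,\bold{b}_{-k}))+v_k(x^*(\bold{v}_k,\bold{b}_{-k})),
\]
which rearranges to $u_k(\bold{v}_k,\bold{b}_{-k})\ge u_k(\bold{b}'_k,\bold{b}_{-k})$. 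Because $\bold{b}_{-k}$ was arbitrary, truthful reporting weakly dominates every alternative, which is the very definition of a dominant strategy. The same derivation applies verbatim to the non-splittable and private services by replacing $\mathcal{X}$ with the feasible region of \eqref{WDP_pn} or \eqref{WDP_pw}.

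The main delicate point, rather than a technical obstacle, is to disentangle the dependence on $\bold{b}_k$ in \eqref{totalcharge}: although the charge $c_k(\bold{b}_{-k})$ is formally written as a function of $\bold{b}_{-k}$ alone, the allocation $x^*$ appearing inside the second sum is itself a function of the full bid profile $\bold{b}$. The proof must make explicit that $k$'s only lever on $u_k$ is the induced allocation, and that truthful reporting aligns the auctioneer's objective exactly with $k$'s private objective; once this alignment is recognized, a one-line feasibility argument closes the theorem, and the argument goes through uniformly for all three service types.
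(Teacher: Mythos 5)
Your proposal is correct and follows essentially the same route as the paper: both arguments use \eqref{totalcharge} to reduce bidder $k$'s utility maximization to the surrogate objective \eqref{utility4} (since $p_{-k}^*$ is independent of $\bold{b}_k$), and both observe that declaring $\bold{b}_k=\bold{v}_k$ makes the auctioneer's winner-determination objective coincide with that surrogate. Your version is in fact slightly more complete, because you close the argument with the explicit feasibility inequality showing that any alternative report $\bold{b}'_k$ induces a weakly worse allocation for $k$, a step the paper's proof only asserts informally after invoking Theorem \ref{theorem:social}.
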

\begin{proof}
%
%
%
%
For the splittable service, the auctioneer determines the winners by solving \eqref{WDP_ps}, and thus
\begin{align}
	x(\text{b}) \in \arg\min_{x\in \mathcal{X}} \left(\sum_k b_k(x)\right) = \arg\min_{x\in \mathcal{X}} \left(\sum_{l\in\mathcal{K}\setminus k}b_l(x)  +b_k(x)\right). \label{utility5}
\end{align}
By Theorem \ref{theorem:social}, the mechanism will result in $x$ based on \eqref{utility4} if bidder $k$ declares $\bold{b}_k=\bold{v}_k$, where $\bold{v}_k=[v_k(\mathcal{S})]_{\mathcal{S}\in\mathcal{Q}}$. Since the declaration of $\bold{b}_k$ does not depend on other bidders and increase in bids will not improve its profit, truth telling is a dominant strategy of bidder $k$. The proof is similar for the non-splittable and private services.
\end{proof}
Therefore, if all other bidders submit their true valuations as bids, one will reveal its true values as bids in the auction. The mechanism can achieve the best economic efficiency and all bidders will behave truthfully resulting in maximizing the social welfare.

The auction exhibits choice-set monotonicity. As we allocate the service from the bidders to the customer, when a new bidder is introduced, all previously existing allocations are still feasible and the introduction of the new bidder creates more allocation choices. 
Moreover, the auction has the property of no negative externalities. When bidder $k$ participates in the auction, it can charge $c_k$, as given in \eqref{totalcharge}, which is always non-negative. If it does not win in the auction, its utility will turn zero. Therefore, our VCG-based charging mechanism in individually rational and hence \textit{strategy-proof} \cite{multiagent}.

Consider the splittable service. Recall that $x^*$ and $p^*$ are the optimal solution of Problem \eqref{WDP_ps} and its objective function value, respectively.
The total charge that the customer should pay is
\begin{align}
\sum_{k\in\mathcal{K}}c_k(\bold{b}_{-k}) 
&= \sum_{k\in\mathcal{K}} \left[ p_{-k}^{*} - \left( p^{*} - \sum_{\mathcal{S}\in \mathcal{Q}}b_k(\mathcal{S})x_k^*(\mathcal{S}) \right) \right]\nonumber\\
&=\sum_{k\in\mathcal{K}}\sum_{\mathcal{S}\in \mathcal{Q}}b_k(\mathcal{S})x_k^*(\mathcal{S}) + \sum_{k\in\mathcal{K}}(p_{-k}^* - p^*)\nonumber\\
&= p^* + \sum_{k\in\mathcal{K}}( p_{-k}^* - p^*). \label{asymtotic}
\end{align}
The mechanism generates the total charge for the customer which is  asymptotically closed to that of the optimal auction, i.e., $p^*$ from solving \eqref{WDP_ps}.
When no single bidder has a significant effect, $p_{-k}^*$ will be very close to $p^*$ and the total charge will tend to the social optimum $p^*$.  Similar results also hold for the non-splittable and private services.

\subsection{\textcolor{black}{Computation}} \label{subsec:computation}

By \cite{IP_NP}, integer programming is in $\mathcal{NP}$ (nondeterministic polynomial time), so as Problems \eqref{WDP_ps}, \eqref{WDP_pn}, and \eqref{WDP_pw} and their $k$-exclusive counterparts. The sizes of these problems \textcolor{black}{grow} with $Q$ and $K$. The capacity of a vehicle, $Q$, is usually fixed and small.  As $K$ is the number of operators interested in the transportation request\footnote{For the splittable service, an operator may elect multiple vehicles to fulfill a transportation request and thus it may submit multiple sets of bids for its vehicles.  This can simply be considered as that it is split into multiple virtual operators, each of which represents one of these vehicles. In practice, it is unusually for an operator to be split into many virtual operators. Moreover, the number of such virtual operators that one can raise is upper bounded by the number of available seats $Q$. The capacity of an AV typically follows the standard of a intermediate-size car, i.e., $Q=5$.}, 
 there are just a few  operators constituting the public transportation system in practice. Thus the problems can still be solved effectively. 

Based on \eqref{totalcharge}, to determine $c_k(\bold{b}_{-k})$ for each $k$, we need to solve \eqref{WDP_ps} and its $k$-exclusive counterpart once. Hence\textcolor{black}{, according to \eqref{asymtotic},} to compute $c^s$ ($c^n$ or $c^p$), we need to solve \eqref{WDP_ps} (\eqref{WDP_pn} or \eqref{WDP_pw}) and its $k$-exclusive counterpart $K$ times. 
\textcolor{black}{For the splittable service, Problem \eqref{WDP_ps} and the $K$ $k$-exclusive counterparts are computationally independent and thus they can be solved in parallel. As to be verified in Section \ref{subsec:comptime}, the computation time required to determine the charge for a practical system is reasonably short. Further, if the problem and its counterparts are solved in parallel, the required computation time will be even shorter. Hence, our approach is computationally feasible. The siutations are similar for the non-splittable and private services.}

\section{Performance Evaluation}\label{sec:simulation}

In this section, we verify the analytical results discussed in the previous sections and evaluate the performance of the charging mechanism. We set the capacity of each vehicle to five, i.e., $Q=5$, by following the standard of a intermediate-size car. We consider the scenarios when sets of operators of different size are interested in serving the transportation request $r$, including having $K$ equal to \textcolor{black}{1,} 5, 10, 30, 50, and 100, which are sufficient to evaluate the performance of a typical AV public transportation system spanning from a small to a large scale. \textcolor{black}{Note that the scenario with $K=1$ is used as a control representing a monopolistic environment.} For each of these scenarios, we generate 100 random cases. Each random case is constructed by generating random bidding functions $b_k(\cdot)$ for the bidders, where the cost of occupying one seat is assumed to be a random number in $(0,1]$. The number of available seats in each elected vehicle available at the time of auctioning is randomly set between one and five.
We perform the simulations on a computer with Intel Core i7-3770 CPU at 3.90 GHz and 16 GB of RAM. They are conducted in the MATLAB environment, where the optimization problems are addressed with YALMIP \cite{yalmip} and CPLEX \cite{cplex}.

\subsection{Total Charges} \label{subsec:totalcharges}

\begin{table}[!t]
\renewcommand{\arraystretch}{1.3}
\caption{Number of unservable cases (among 100).}
\label{tab:unservable}
\centering
\begin{tabular}{c|c|c|c | c|c|c}
\hline\hline
\multirow{2}{*}{$K$} & \multirow{2}{*}{Service}	& \multicolumn{5}{c}{No. of seats required}\\
\cline{3-7}
 &	& 1 & 2 & 3 & 4 & 5\\
\hline
\multirow{3}{*}{1}	& Splittable	 & 0	& 19	& 44	& 65	& 87\\
				& Non-splittable& 0	& 19	& 44	& 65	& 87	\\
				& Private	 	 & 87& 87	& 87	& 87	& 87\\
\hline
\multirow{3}{*}{5}	& Splittable	 & 0	& 0	& 0	& 0	& 0\\
				& Non-splittable& 0	& 0	& 6	& 34	& 74	\\
				& Private	 	 & 74& 74	& 74	& 74	& 74\\
\hline
\multirow{3}{*}{10}	& Splittable	 & 0	& 0	& 0	& 0	& 0\\
				& Non-splittable& 0	& 0	& 6	& 3	& 32	\\
				& Private	 	 & 32& 32	& 32	& 32	& 32\\
\hline
\multirow{3}{*}{30}	& Splittable	 & 0	& 0	& 0	& 0	& 0\\
				& Non-splittable& 0	& 0	& 0	& 0	& 0	\\
				& Private	 	 & 0  & 0	& 0	& 0	& 0\\
\hline
\multirow{3}{*}{50}	& Splittable	 & 0	& 0	& 0	& 0	& 0\\
				& Non-splittable& 0	& 0	& 0	& 0	& 0	\\
				& Private	 	 & 0	& 0	& 0	& 0	& 0\\
\hline
\multirow{3}{*}{100}	& Splittable	 & 0	& 0	& 0	& 0	& 0\\
				& Non-splittable& 0	& 0	& 0	& 0	& 0	\\
				& Private	 	 & 0  & 0	& 0	& 0	& 0\\
\hline\hline
\end{tabular}
\end{table}


\begin{figure}[!t]
	\begin{center}
		\subfigure[Splittable service.]{\label{fig:obj2-s}\includegraphics[width=3.5in]{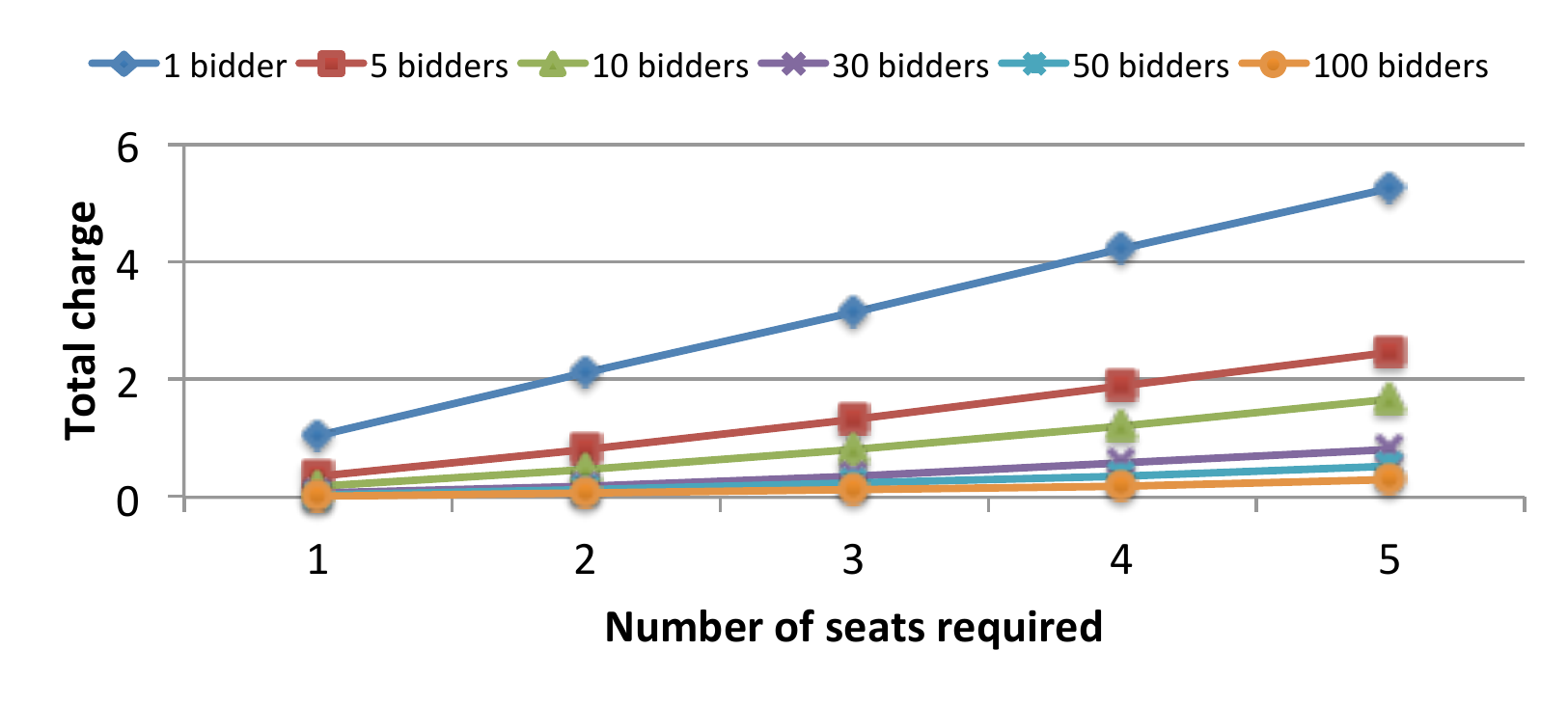}}
    \subfigure[Non-splittable service.]{\label{fig:obj2-n}\includegraphics[width=3.5in]{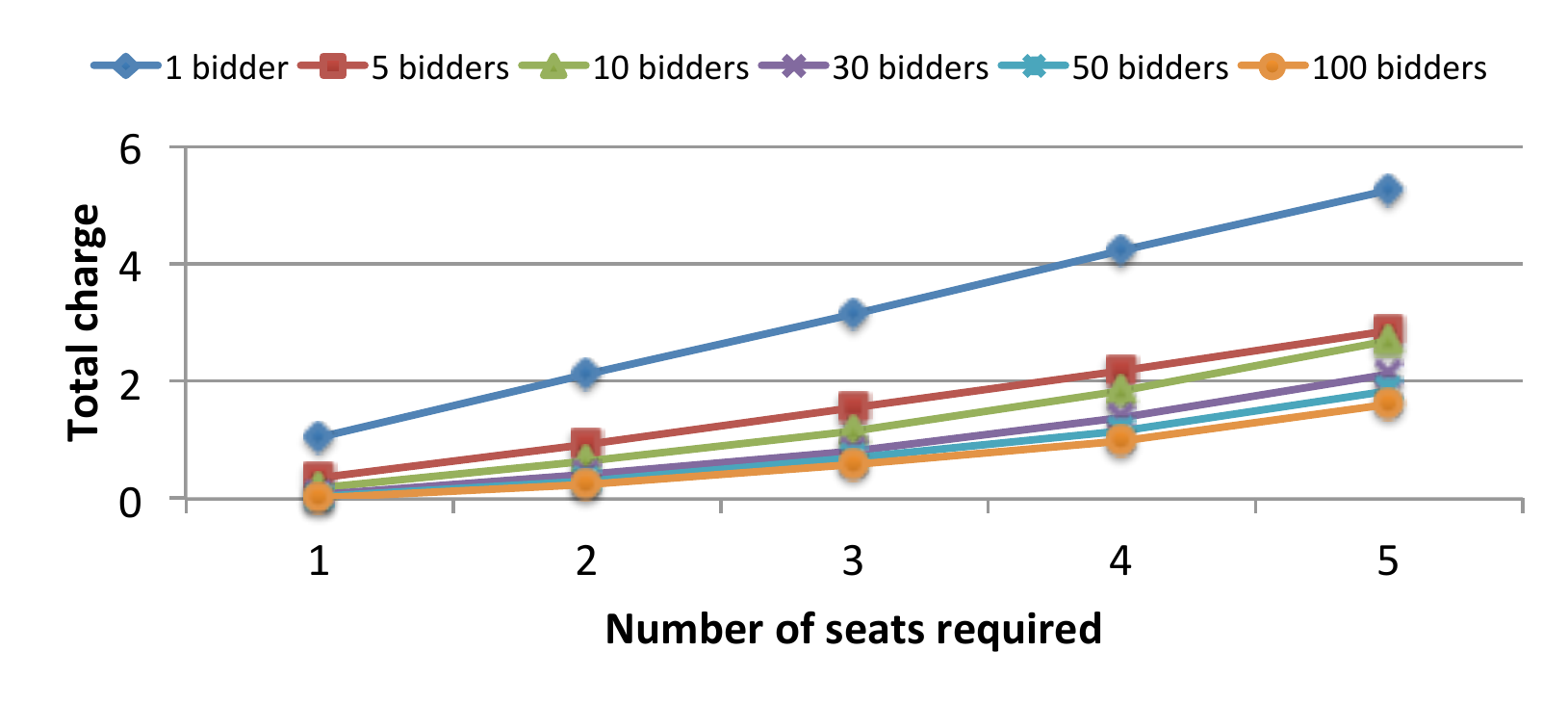}} 
		\subfigure[Private service.]{\label{fig:obj2-w}\includegraphics[width=3.5in]{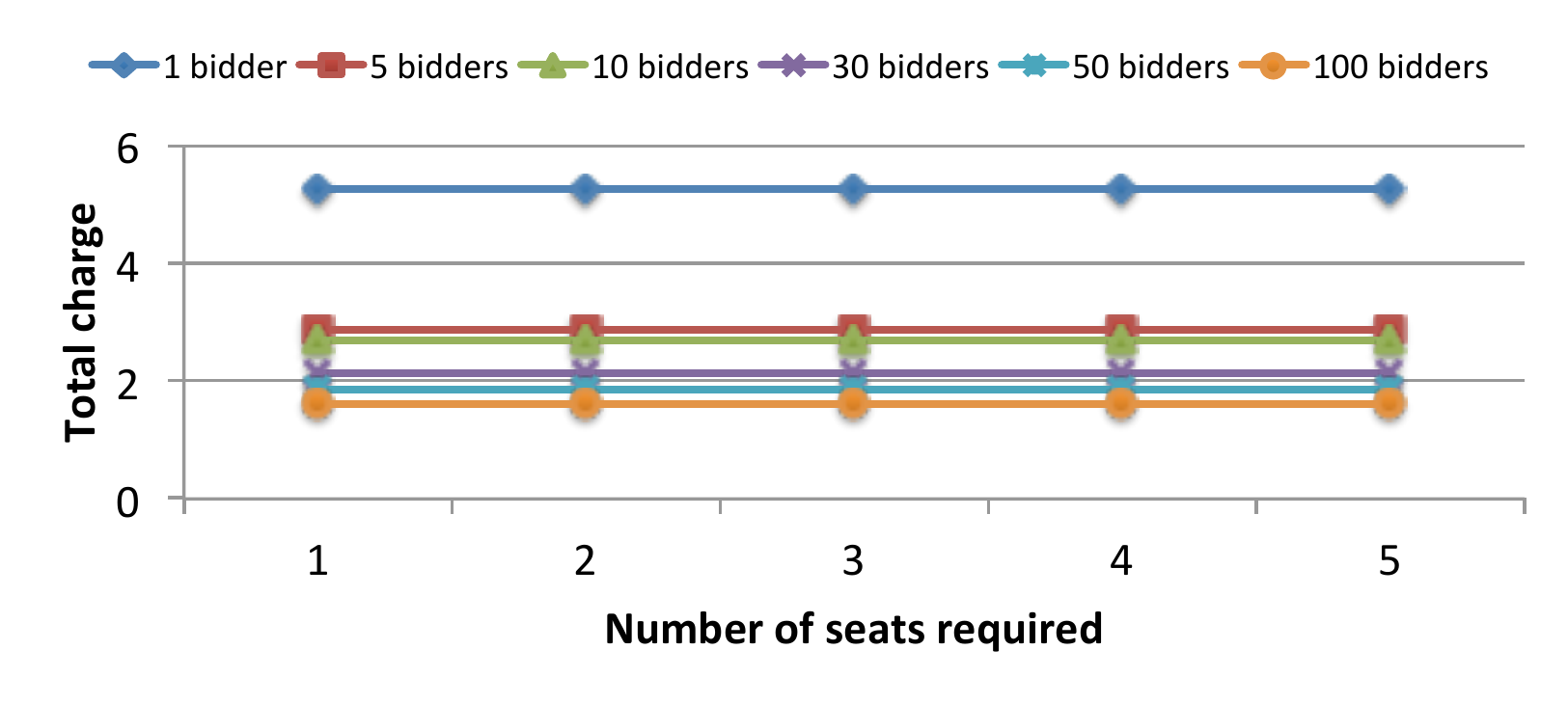}}
	\end{center}
	\caption{Charges computed by the VCG charging mechanism.}
  \label{fig:totalCharge}
\end{figure}

We first examine the servability and the changes of total charges for the three service types. Table \ref{tab:unservable} shows the number of unservable cases among  the 100 cases, for each service type in each scenario. A case is deemed unservable when no operator can elect a vehicle which can meet the occupancy requirement of the request. In general, when the number of bidders increases, the unseravable cases vanish  because the chance of having one vehicle suitable for serving the request becomes higher with a larger set of bidders. When the number of seats required, $q_r$, increases, the number of unservable cases increases due to the more stringent occupancy requirement. For comparing the service types, the private service is the easiest to become unservable while the splittable service is the hardest. As all seats in a vehicle are required, the occupancy requirement for the private service is the most stringent. For the splittable service, it is easier to get suitable vehicles to serve the separated groups of passengers. When the size of bidders is small (e.g., $K=5$ or 10), the customer may only have options from a subset of services among the three service types. With a sufficient number of operators (e.g., $K=30$ or more), the customer can select among all the three services. 
\textcolor{black}{Moreover, the multi-tenant ($K>1$) sencarios have relatively far fewer unservable cases than the monopolistic ($K=1$) one. This shows that multi-tenancy can improve the servability of the system.}

Fig. \ref{fig:totalCharge} shows the charges for the three service types computed from the VCG charging mechanism with different numbers of seats required, $q_r$, stated in the request and each data point is the average of the servable cases. 
\textcolor{black}{As the VCG mechanism is not applicable for $K=1$, the total charges for the single bidder cases are set to the corresponding optimal charges, i.e., $p^*$.}
For the splittable service, as shown in \ref{fig:obj2-s}, the total charge increases with $q_r$. The total cost and the growth rate of the cost in $q_r$ decrease with $K$. A lager $\mathcal{K}$ results in a  stronger competition, which leads to a lower cost. The situation is similar for the non-splittable service given in Fig. \ref{fig:obj2-n}, but the charge differences between the scenarios of different $K$ become smaller when compared with those shown in Fig. \ref{fig:obj2-s}. In other words, the saving from a stronger competition due to  non-splittable service diminishes when compared with the splittable service. The reason is that the flexibility of allocating vehicles to the customer is lower and thus the customer has smaller bargaining power on the bidders. As shown in Fig. \ref{fig:obj2-w}, the total charge for private service is independent of the number of seats required as all seats in the vehicle are required. 
\textcolor{black}{As the monopolistic (1 bidder) senario produces much higher charges, multi-tenancy can increase market competition leading to lower service charges.}

\begin{figure}[!t]
\hspace{-0.3cm}
\includegraphics[width=3.7in]{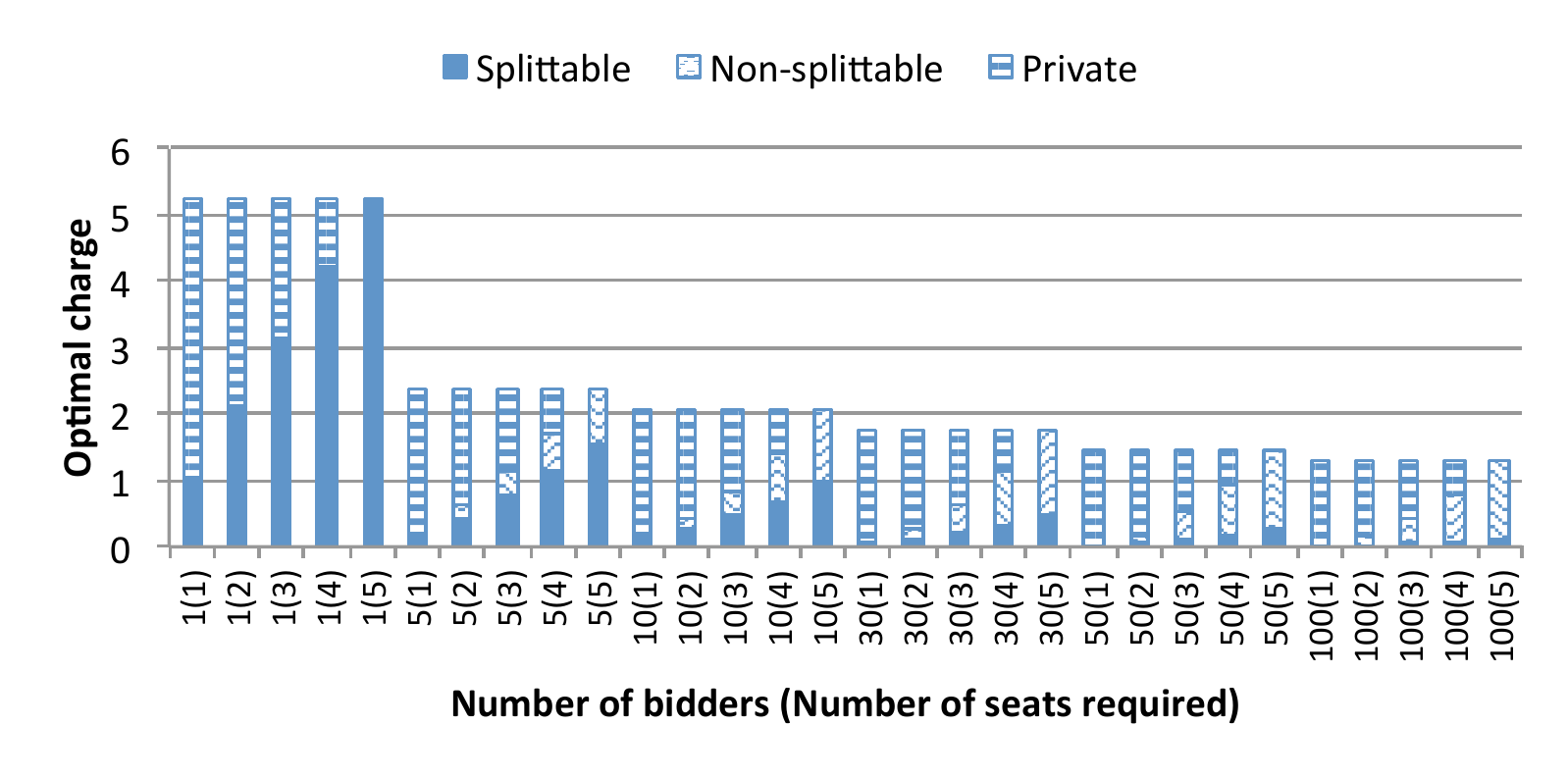} \vspace{-0.5cm}
\caption{Optimal charges for the three service types.} 
\label{fig:opt}
\vspace{-0.5cm}
\end{figure}

The charges induced from the VCG charging mechanism will approach the optimal charges, i.e., $p^s$, $p^n$, and $p^p$,  when no single bidder has significant effect (more results on charge asymptoticity will be discussed in Section \ref{subsec:asym}). Fig. \ref{fig:opt} illustrates the optimal charges and the results are the averages of the servable cases. For clear presentation, the bars of the three service types are overlapped for each scenario. Among the three service types, the splittable service incurs the lowest charge while the private service acquires the highest. This confirms the results given in Theorem \ref{lemma:pricerank}. Similar to Fig. \ref{fig:totalCharge}, a stronger competition (with more bidders) results in a lower charge. When the number of seats required is small, the charges for splittable and non-splittable services are of similar values and they are much lower than that for private service. If the passengers do not mind to share the ride with other people, the non-splittable service is worthwhile to take. When the demand for seat occupancy grows, the difference between the charges for non-splittable and private services decreases.  Hence, the private service becomes more worthwhile.


\subsection{Truthfulness}

\begin{table*}[!t]
\caption{Winners and their charges for the splittable service.}
\label{tab:fool}
\centering
\begin{tabular}{c|c|c|c|c | c|c|c}
\hline\hline
$q_r$ 	& bid 		&\multicolumn{5}{c|}{Winner's ID}&Total \\ 
	& increase	& \multicolumn{5}{c|}{(Charge)}& charge \\\hline

\multirow{6}{*}{1}	& \multirow{2}{*}{0\%} 	& \multicolumn{5}{c|}{69} 				& \multirow{2}{*}{0.0295}\\
										& 											& \multicolumn{5}{c|}{(0.0295)} 	& \\ \cline{2-8}
										& \multirow{2}{*}{20\%}	& \multicolumn{5}{c|}{69}				& \multirow{2}{*}{0.0295}\\
										& 											& \multicolumn{5}{c|}{(0.0295)} 	& \\ \cline{2-8}
										& \multirow{2}{*}{50\%}	& \multicolumn{5}{c|}{82}				& \multirow{2}{*}{0.0363}\\
										& 											& \multicolumn{5}{c|}{(0.0363)} 	& \\ \hline
										
\multirow{6}{*}{2}	& \multirow{2}{*}{0\%} 	& \multicolumn{2}{c}{69} & \multicolumn{3}{c|}{82}				& \multirow{2}{*}{0.0726}\\
										& 											& \multicolumn{2}{c}{(0.0363)} & \multicolumn{3}{c|}{(0.0363)}			 	& \\ \cline{2-8}
										& \multirow{2}{*}{20\%}	& \multicolumn{2}{c}{69} & \multicolumn{3}{c|}{82}				& \multirow{2}{*}{0.0726}\\
										& 											& \multicolumn{2}{c}{(0.0363)} & \multicolumn{3}{c|}{(0.0363)} 	& \\ \cline{2-8}
										& \multirow{2}{*}{50\%}	& \multicolumn{2}{c}{60} & \multicolumn{3}{c|}{79}				& \multirow{2}{*}{0.1049}\\
										& 											& \multicolumn{2}{c}{(0.0524)} & \multicolumn{3}{c|}{(0.0524)} 	& \\ \hline
										
\multirow{6}{*}{3}	& \multirow{2}{*}{0\%} 	& \multicolumn{1}{c}{69}	&\multicolumn{2}{c}{79} & \multicolumn{2}{c|}{82} 				& \multirow{2}{*}{0.1444}\\
										& 											& \multicolumn{1}{c}{(0.0481)}	&\multicolumn{2}{c}{(0.0481)} & \multicolumn{2}{c|}{(0.0481)} 	& \\ \cline{2-8}
										& \multirow{2}{*}{20\%}	& \multicolumn{1}{c}{69}	&\multicolumn{2}{c}{79} & \multicolumn{2}{c|}{82}				& \multirow{2}{*}{0.1444}\\
										& 											& \multicolumn{1}{c}{(0.0481)}	&\multicolumn{2}{c}{(0.0481)} & \multicolumn{2}{c|}{(0.0481)} 	& \\ \cline{2-8}
										& \multirow{2}{*}{50\%}	& \multicolumn{1}{c}{38}	&\multicolumn{2}{c}{41} & \multicolumn{2}{c|}{60}				& \multirow{2}{*}{0.2133}\\
										& 											& \multicolumn{1}{c}{(0.0711)}	&\multicolumn{2}{c}{(0.0711)} & \multicolumn{2}{c|}{(0.0711)} 	& \\ \hline
									
\multirow{6}{*}{4}	& \multirow{2}{*}{0\%} 	& \multicolumn{1}{c}{60} 		& \multicolumn{1}{c}{69} & \multicolumn{1}{c}{79} &\multicolumn{2}{c|}{82}		& \multirow{2}{*}{0.2097}\\
										& 											& \multicolumn{1}{c}{(0.0524)} & \multicolumn{1}{c}{(0.0524)} & \multicolumn{1}{c}{(0.0524)} & \multicolumn{2}{c|}{(0.0524)}	& \\ \cline{2-8}
										& \multirow{2}{*}{20\%}	& \multicolumn{1}{c}{41} & \multicolumn{1}{c}{69} & \multicolumn{1}{c}{79} & \multicolumn{2}{c|}{82}				& \multirow{2}{*}{0.2310}\\
										& 											& \multicolumn{1}{c}{(0.0577)} & \multicolumn{1}{c}{(0.0577)} & \multicolumn{1}{c}{(0.0577)} & \multicolumn{2}{c|}{(0.0577)} 	& \\ \cline{2-8}
										& \multirow{2}{*}{50\%}	& \multicolumn{1}{c}{29}	& \multicolumn{2}{c}{38}& \multicolumn{2}{c|}{41} & \multirow{2}{*}{0.3326}\\
										& 											& \multicolumn{1}{c}{(0.1904)} &  \multicolumn{2}{c}{(0.0711)} & \multicolumn{2}{c|}{(0.0711)} 	& \\ \hline
										
\multirow{6}{*}{5}	& \multirow{2}{*}{0\%} 	& \multicolumn{1}{c}{41} & \multicolumn{1}{c}{60} & \multicolumn{1}{c}{69} & \multicolumn{1}{c}{79} & \multicolumn{1}{c|}{82}				& \multirow{2}{*}{0.3115}\\
										& 											& \multicolumn{1}{c}{(0.0623)} & \multicolumn{1}{c}{(0.0623)} & \multicolumn{1}{c}{(0.0623)} & \multicolumn{1}{c}{(0.0623)} & \multicolumn{1}{c|}{(0.0623)} 	& \\ \cline{2-8}
										& \multirow{2}{*}{20\%}	& \multicolumn{1}{c}{38}	& \multicolumn{1}{c}{60} & \multicolumn{1}{c}{69} & \multicolumn{1}{c}{79} & \multicolumn{1}{c|}{82}			& \multirow{2}{*}{0.3146}\\
										& 											& \multicolumn{1}{c}{(0.0629)} & \multicolumn{1}{c}{(0.0629)} & \multicolumn{1}{c}{(0.0629)} & \multicolumn{1}{c}{(0.0629)} & \multicolumn{1}{c|}{(0.0629)} 	& \\ \cline{2-8}
										& \multirow{2}{*}{50\%}	& \multicolumn{1}{c}{29} & \multicolumn{1}{c}{36} & \multicolumn{1}{c}{38} & \multicolumn{2}{c|}{84}				& \multirow{2}{*}{0.4952}\\
										& 											& \multicolumn{1}{c}{(0.2039)} & \multicolumn{1}{c}{(0.1220)} & \multicolumn{1}{c}{(0.0846)} & \multicolumn{2}{c|}{(0.0846)}	& \\ 
\hline\hline
\end{tabular}
\end{table*}

Next we investigate the consequence of not telling the truth. That is, the bidders intend to raise their bids ($b_k(\mathcal{S})>v_k(\mathcal{S})$) for potential higher profit. 
We illustrate the results based on a particular random example of the $K=100$ scenario used in Section \ref{subsec:totalcharges}.
Table \ref{tab:fool} gives the winners and their charges with increasing bids for the cases of different $q_r$, i.e.,  the number of seats required for the request of splittable service. We consider the cases with zero bid increase as the base cases. For those with positive bid rise, we increase the bids of the winners of their corresponding base cases with certain percentages. From the results with $q_r$ equal to 1, 2, and 3, it is clear that an increase of one's bid does not increase its final charge but possibly the others' charges. When the increase is large (e.g., by $50\%$), it may lead to changes of winners. For example, for $q_r=1$, Bidder 69 is the winner in the base case with charge equal to $0.0295$. An increase of its bid by 20\% does not help charge more (also $0.0295$) while a larger increase by $50\%$ results in the change of winner to Bidder 82 with a large charge of $0.0363$. For $q_r$ equal to 4 and 5, it seems that some bidders do improve their charges by raising their bids. For example, for $q_r=4$, Bidder 69 can increase its final charge from 0.0524 to 0.0577 by raising its bid for 20\%. However, this charge improvement does not come from Bidder 69's bid increase but it is due to Bidder 60's action. As seen from \eqref{totalcharge}, the charge $c_k(\bold{b}_{-k})$ that a bidder $k$ receives from the auction is independent of its submitted bid $\bold{b}_k$. Hence, the charge improvement due to one's bid rise is just an illusion.
As a whole, an increase of one's bid does not improve, or even depresses, its utility and this potentially enhances the utilities of other bidders only. Moreover, the customer may need to pay more in the presence of untruthful bidders.

\begin{figure*}[!t]
	\begin{center}
		\subfigure[Splittable service.]{\label{fig:fool-s}\includegraphics[width=7.2in]{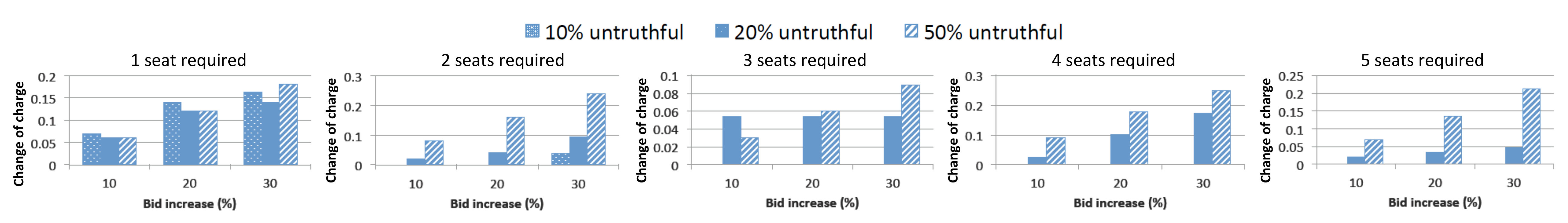}}
    \subfigure[Non-splittable service.]{\label{fig:fool-n}\includegraphics[width=7.2in]{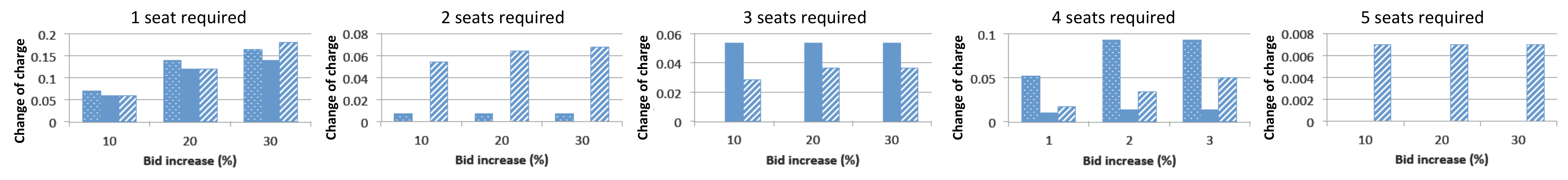}} 
		\subfigure[Private service.]{\label{fig:fool-w}\includegraphics[width=7.2in]{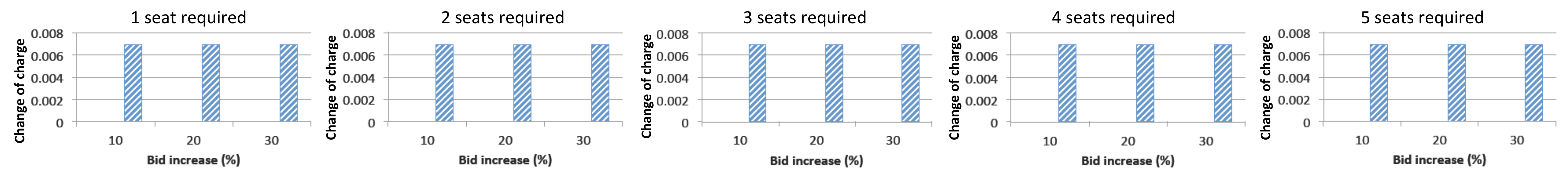}}
	\end{center}
	\caption{Change of charge with Truthfulness.}
  \label{fig:fool}
\end{figure*}
We further examine the total charges, instead of individual charges, for the three service types.
We define the ``change of charge'' as $\frac{\alpha - \beta}{\beta}$, where $\alpha$ and $\beta$ are the total charges with and without untruthfulness, respectively.
The latter refers to the total charge when all bidders report their bids as the true valuations, i.e., $b_k(\mathcal{S})=v_k(\mathcal{S})$ for all $\mathcal{S}$ and $k$. With untruthfulness, some bidders raise their bids higher than the valuations. Positive change of charge means that the customer needs to pay more as a whole (not for a particular operator).
Fig. \ref{fig:fool} gives the change of charge with different number of untruthful bidders and different amount of bid rise. For each simulation run, we randomly select 10\%, 20\%, or 50\% of the bidders as untruthful bidders, who raise their bids from the true valuations by 10\%, 20\%, or 30\%. In each case, the change of charge is always nonnegative. For those without change of charge, the bidders selected for bid increase are not the winners of the auction. Their rise of bids will not affect the charges to the customer. In reality, a bidder will not know if it would be a winner before conducting the auction. So a rise of one's bid may not always make the system performance worse. For the splittable service, the total charge tends to increase with higher bid rise. However, the trends for the non-splittable and private services are not obvious. As there is only one winner in the auction for these two services, the change of charge becomes sensitive to the selected group of untruthful bidders.

\subsection{Asymptoticity of the Charges} \label{subsec:asym}

\begin{figure}[!t]
\centering
	\begin{center}
		\subfigure[Splittable service.]{\label{fig:var-s}\includegraphics[width=3.5in]{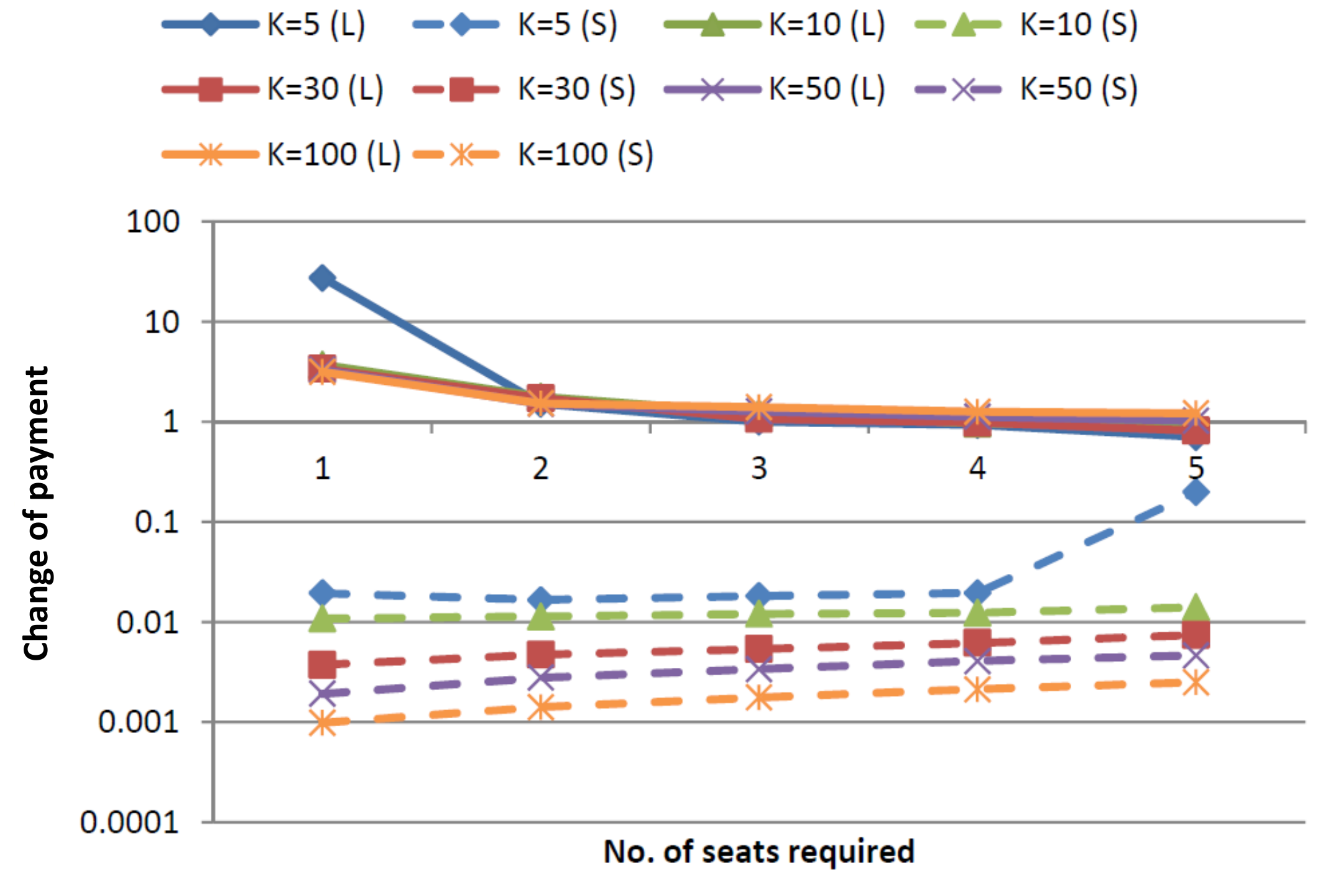}}\\
    \subfigure[Non-splittable service.]{\label{fig:var-n}\includegraphics[width=3.5in]{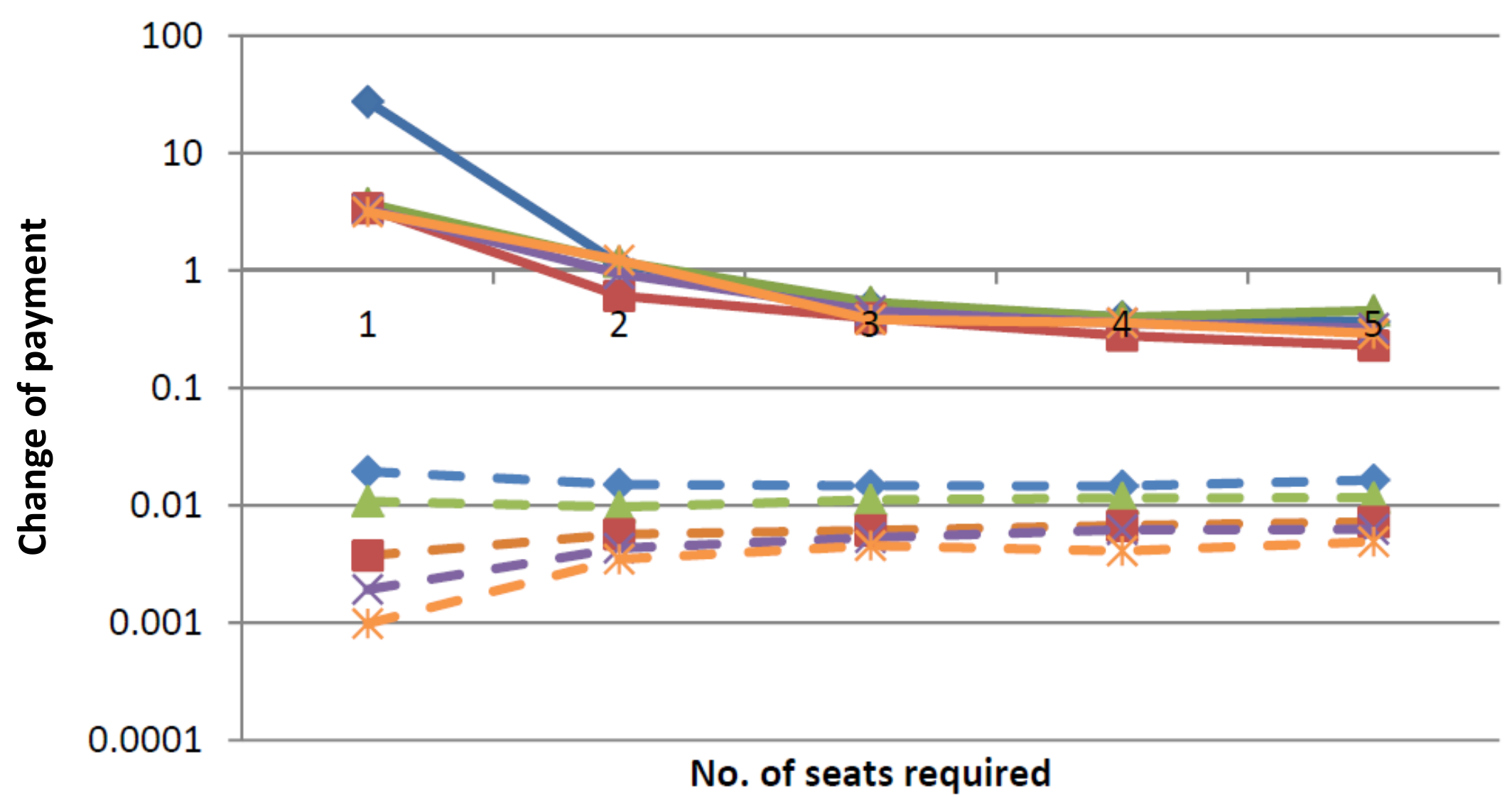}} 
		\subfigure[Private service.]{\label{fig:var-w}\includegraphics[width=3.5in]{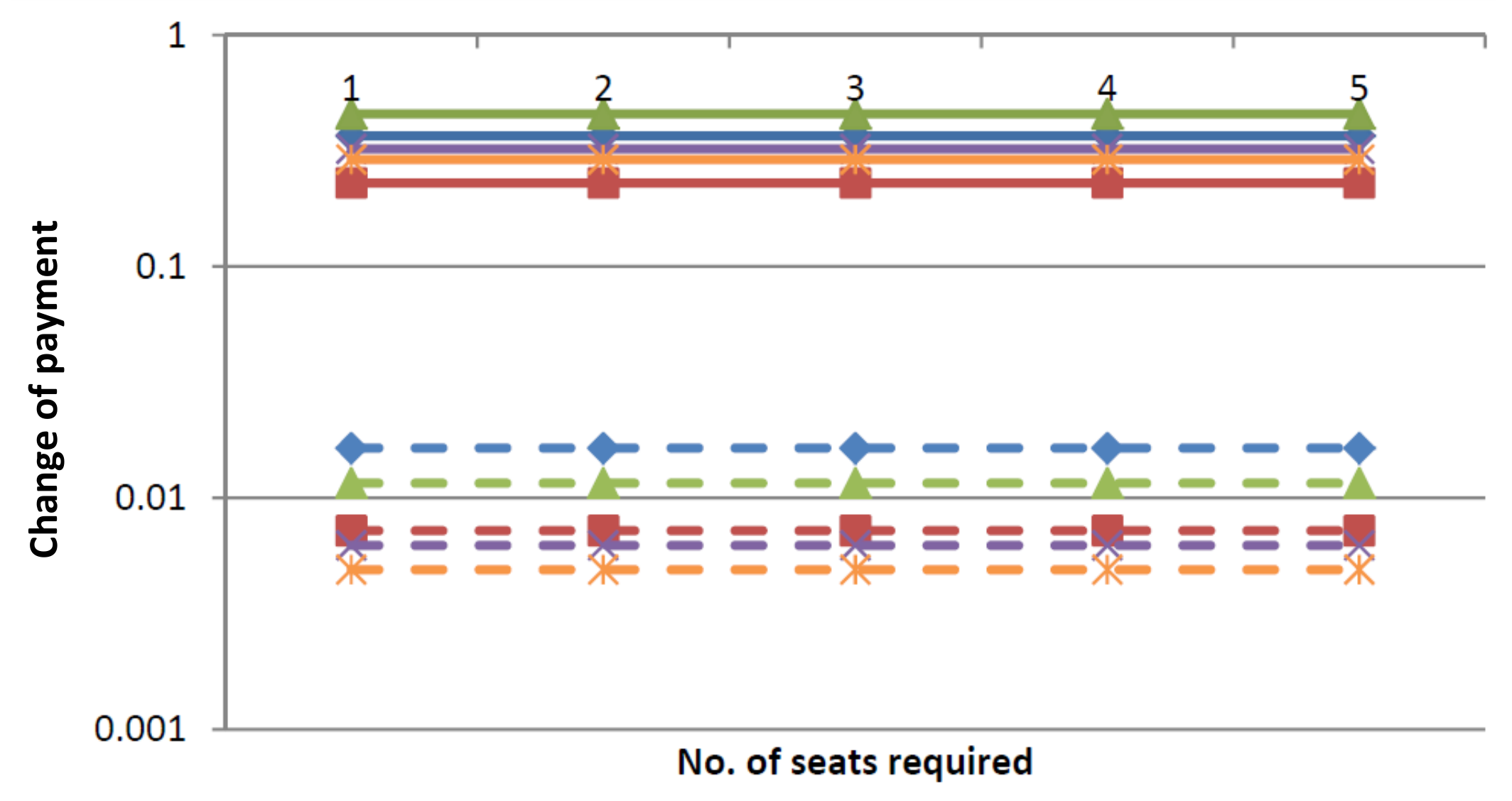}}
	\end{center}
	\caption{Change of payment with different variations of bids.}
  \label{fig:var}
\end{figure}

We interrogate the asymptoticity of the charges as illustrated in \eqref{asymtotic}. We define ``change of payment'' as $\frac{\sum_k c_k - p^*}{p^*}$ (based on \eqref{asymtotic}). The smaller the change of payment, the smaller the difference between the total charge and the total bid of the optimal auction is. We consider two ways of generating the random bids, i.e., with large and small variations. For the former, all bids for occupying one seat are randomly generated in $(0,1]$. For the latter, the bids are constructed as $rand(0,0.1]+0.5$, where $rand(0,0.1]$ refers to a random number in $(0,0.1]$. So the values of bids for each seat occupancy with small variations are closer to each other. The results for different $K$ and $q_r$ are provided in Fig. \ref{fig:var} where the change of payment is presented in the logarithmic scale and the large and small variations are indicated as ``(L)'' and ``(S)'', respectively.\footnote{\textcolor{black}{As the VCG mechanism is not applicable for the monopolistic scenario, here we focus on those with $K>1$.}} We compare the results of the two variations in pair for each case and bids with smaller variations always result in smaller change of payment. This confirms our interpretation of \eqref{asymtotic} in Section \ref{sec:charging}, i.e., the total charge is close to the result of  the optimum auction when no single bidder has a significant effect.  It is unlikely to have a bidder affect the auction significantly when the values of bids are close to each other.

\subsection{Computation Time} \label{subsec:comptime}
\begin{figure}[!t]
\centering
\hspace{-0.3cm}
\includegraphics[width=3.5in]{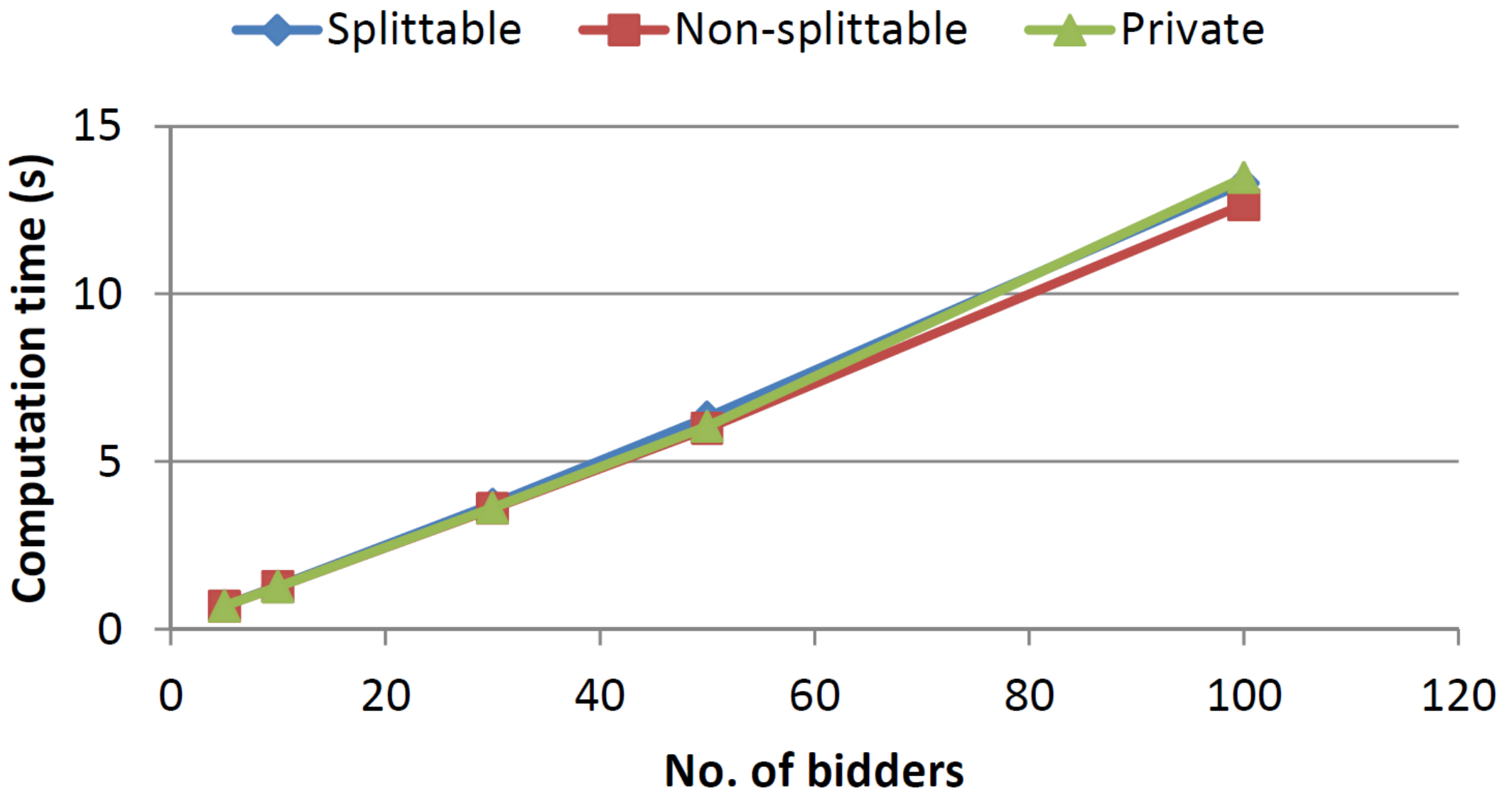} \vspace{-0.5cm}
\caption{Computation time.} 
\label{fig:time}
\vspace{-0.5cm}
\end{figure}

We investigate the computation time for computing the charges with the scenarios of $K$ equal to 5, 10, 30, 50, and 100. Fig. \ref{fig:time} shows the computation times required to determine the total charges for the three service types, where each data point is the average of 100 random cases. In general, the three services require similar computation time, which is proportional to $K$. 
\textcolor{black}{Alhough our approach is designed to handle one service request, there may be multiple outstanding requests needed to be processed at the same time in a pratical system. In fact, our approach can facilitate two levels of parallelization to make the computation efficient. First, as illustrated in Section \ref{subsec:pricingprocess}, we can simply apply the pricing process to each of the requests and obtain the corresponding charges simultaneously. Second, as discussed in Section \ref{subsec:computation}, the pricing process can be further broken into multiple computational modules, each of which can be run in parallel, due to the fact that Problem \eqref{WDP_ps} (\eqref{WDP_pn} or \eqref{WDP_pw}) and its $k$-exclusive counterparts are computationally independent.}
When higher performance is required in a practical environment, instead of a general-purpose computer, a high-performance field-programmable gate array-based computer \cite{FPGA}  can be utilized to boost the performance on computing the charges.
\textcolor{black}{Hence, the problems can be solved in parallel and the computation of a total charge requires much less time. This suggests that our approach can be made efficient to implement in a practical system.}

\section{Conclusion}\label{sec:conclusion}

To improve people's standard of living, we may turn a city into a smart city through smarter utilization of resources with modern technologies. Transport is one of the main foci in smart city research and development. With the advent of AV technologies, AVs will become prevalent in the future transportation system. Recently, a novel AV-based public transportation system has been proposed and it is capable of providing new forms of transportation services with high efficiency, high flexibility, and low cost. For the benefit of passengers, multitenancy can increase market competition leading to lower service charge and higher quality of service. This paper is dedicated to studying the pricing issue of the multi-tenant AV public transportation system supported by multiple AV operators. We broaden the service options by introducing three types of services, including the splittable, non-splittable, and private services. We model the pricing process with combinatorial auctions, where the operators bid for offering the service requested by a customer. For each service type, the winners of the auction are determined through an integer linear program. To prevent the bidders from lying about the true valuations in their bids, we propose a VCG-based charging mechanism, which is strategy-proof and capable of maximizing the social welfare, to settle the final charges. We verify our analytical results and evaluate the performance of the charging mechanism with extensive simulations. We summarize the contributions achieved in this paper as follows: (i) introducing multitenancy to the AV public transportation system; (ii) identifying the splittable, non-splittable, and private service types; (iii) modeling the pricing process with combinatorial auctions; (iv) formulating the winner determination problems as integer linear programs; (v) proposing a strategy-proof charging mechanism; (vi) deriving analytical results for the pricing process and the charging mechanism; and (vii) verifying the results with extensive simulations. In the future, we will study a coalition game-based pricing for the system, in which the operators may exchange private information for higher profit.




%
%
%


\bibliographystyle{IEEEtran}
\bibliography{IEEEabrv}

\begin{thebibliography}{10}
\providecommand{\url}[1]{#1}
\csname url@samestyle\endcsname
\providecommand{\newblock}{\relax}
\providecommand{\bibinfo}[2]{#2}
\providecommand{\BIBentrySTDinterwordspacing}{\spaceskip=0pt\relax}
\providecommand{\BIBentryALTinterwordstretchfactor}{4}
\providecommand{\BIBentryALTinterwordspacing}{\spaceskip=\fontdimen2\font plus
\BIBentryALTinterwordstretchfactor\fontdimen3\font minus
  \fontdimen4\font\relax}
\providecommand{\BIBforeignlanguage}[2]{{%
\expandafter\ifx\csname l@#1\endcsname\relax
\typeout{** WARNING: IEEEtran.bst: No hyphenation pattern has been}%
\typeout{** loaded for the language `#1'. Using the pattern for}%
\typeout{** the default language instead.}%
\else
\language=\csname l@#1\endcsname
\fi
#2}}
\providecommand{\BIBdecl}{\relax}
\BIBdecl

\bibitem{smartcity}
J.~Jin, J.~Gubbi, S.~Marusic, and M.~Palaniswami, ``An information framework
  for creating a smart city through internet of things,'' \emph{IEEE Internet
  of Things Journal}, vol.~1, no.~2, pp. 112--121, Apr. 2014.

\bibitem{AVWiki}
\BIBentryALTinterwordspacing
Wikipedia. (2015, Feb.) Autonomous car. [Online]. Available:
  \url{http://en.wikipedia.org/wiki/Autonomous_car}
\BIBentrySTDinterwordspacing

\bibitem{google}
\BIBentryALTinterwordspacing
------. (2014) Google driverless car. [Online]. Available:
  \url{http://en.wikipedia.org/wiki/Google_driverless_car}
\BIBentrySTDinterwordspacing

\bibitem{BMW}
\BIBentryALTinterwordspacing
D.~News. (2014, Mar.) {BMW}, {Audi} push self-driving cars closer to reality.
  [Online]. Available:
  \url{http://www.nydailynews.com/autos/making-self-driving-cars-everyday-reality-article-1.1251048}
\BIBentrySTDinterwordspacing

\bibitem{benz}
\BIBentryALTinterwordspacing
M.~Online. (2013, Sep.) It really is hands free! self-driving {Mercedes-Benz}
  is unveiled - and it should be available within seven years. [Online].
  Available:
  \url{http://www.dailymail.co.uk/sciencetech/article-2418526/Self-driving-Mercedes-Benz-sale-2020-unveiled.html}
\BIBentrySTDinterwordspacing

\bibitem{AVlegal}
\BIBentryALTinterwordspacing
B.~W. Smith. (2013, Jan.) Automated driving: Legislative and regulatory action.
  [Online]. Available:
  \url{http://cyberlaw.stanford.edu/wiki/index.php/Automated_Driving:_Legislative_and_Regulatory_Action}
\BIBentrySTDinterwordspacing

\bibitem{confversion}
A.~Y.~S. Lam, Y.-W. Leung, and X.~Chu, ``Autonomous vehicle publication
  system,'' in \emph{Proc. the 3rd Int. Conf. on Connected Veh. and Expo},
  Vienna, Austria, Nov. 2014.

\bibitem{vanet_survey}
H.~Hartenstein and K.~P. Laberteaux, ``A tutorial survey on vehicular ad hoc
  networks,'' \emph{{IEEE} Commun. Mag.}, vol.~46, no.~6, pp. 164--171, Jun.
  2008.

\bibitem{admissioncontrol}
\BIBentryALTinterwordspacing
A.~Y.~S. Lam, Y.-W. Leung, and X.~Chu. (2015) Autonomous vehicle public
  transportation system: Scheduling and admission control. [Online]. Available:
  \url{http://arxiv.org/abs/1502.07242}
\BIBentrySTDinterwordspacing

\bibitem{Vickrey}
W.~Vickrey, ``Counterspeculation, auctions, and competitive sealed tenders,''
  \emph{J. of Finance}, vol.~16, no.~1, pp. 8--37, Mar. 1961.

\bibitem{Clarke}
E.~H. Clarke, ``Multipart pricing of public goods,'' \emph{Public Choice},
  vol.~11, no.~1, pp. 17--33, Fall 1971.

\bibitem{Groves}
T.~Groves, ``Incentives in teams,'' \emph{Econometrica}, vol.~41, no.~4, pp.
  617--631, Jul. 1973.

\bibitem{Auction_Elec}
Y.~S. Son, R.~Baldick, K.-H. Lee, and S.~Siddiqi, ``Short-term electricity
  market auction game analysis: Uniform and pay-as-bid pricing,'' \emph{{IEEE}
  Trans. Power Syst.}, vol.~19, no.~4, pp. 1990--1997, Nov. 2004.

\bibitem{doublelayer}
A.~Y.~S. Lam, L.~Huang, A.~Silva, and W.~Saad, ``A multi-layer market for
  vechicle-to-grid energy trading in the smart grid,'' in \emph{Proceedings of
  1st IEEE INFOCOM Workshop on Green Networking and Smart Grids}, Orlando, FL,
  Mar. 2012.

\bibitem{Auction_internet}
P.~Maille and B.~Tuffin, ``Pricing the internet with multibid auctions,''
  \emph{{IEEE/ACM} Trans. Netw.}, vol.~14, no.~5, pp. 992--1004, Oct. 2006.

\bibitem{Auction_cognitive}
M.~N. Tehrani and M.~Uysal, ``Auction based spectrum trading for cognitive
  radio networks,'' \emph{{IEEE} Commun. Lett.}, vol.~17, no.~6, pp.
  1168--1171, Jun. 2013.

\bibitem{VCG_smartgrid}
P.~Samadi, H.~Mohsenian-Rad, R.~Schober, and V.~W.~S. Wong, ``Advanced demand
  side management for the future smart grid using mechanism design,''
  \emph{IEEE Trans. Smart Grid}, vol.~3, no.~3, pp. 1170--1180, Sept. 2012.

\bibitem{VCG_Kelly}
S.~Yang and B.~Hajek, ``Vcg-kelly mechanisms for allocation of divisible goods:
  Adapting vcg mechanisms to one-dimensional signals,'' \emph{{IEEE} J. Sel.
  Areas Commun.}, vol.~25, no.~6, pp. 1237--1243, Aug. 2007.

\bibitem{VCG_cloud}
A.~S. Prasad and S.~Rao, ``A mechanism design approach to resource procurement
  in cloud computing,'' \emph{{IEEE} Trans. Comput.}, vol.~63, no.~1, pp.
  17--30, Jan. 2014.

\bibitem{cloud}
T.~Erl, Z.~Mahmood, and R.~Puttini, \emph{Cloud Computing: Concepts,
  Technology, \& Architecture}.\hskip 1em plus 0.5em minus 0.4em\relax Upper
  Saddle River, NJ: Prentice Hall, 2013.

\bibitem{CombAuctionSurvey}
S.~de~Vries and R.~V. Vohra, ``Combinatorial auctions: A survey,''
  \emph{INFORMS J. Comput.}, vol.~15, no.~3, pp. 284--309, Summer 2003.

\bibitem{multiagent}
Y.~Shoham and K.~Leyton-Brown, \emph{Multiagent Systems: Algorithmic,
  Game-Theoretic, and Logical Foundations}.\hskip 1em plus 0.5em minus
  0.4em\relax Cambridge ; New York: Cambridge University Press, 2009.

\bibitem{IP_NP}
C.~H. Papadimitriou, ``On the complexity of integer programming,'' \emph{J. of
  ACM}, vol.~28, no.~4, pp. 765--768, Oct. 1981.

\bibitem{yalmip}
J.~L{\"o}fberg, ``{YALMIP}: A toolbox for modeling and optimization in
  {MATLAB},'' in \emph{Proceedings of IEEE Int. Symposium on Comput. Aided
  Control Syst. Design}, Taipei, Taiwan, Sep. 2004.

\bibitem{cplex}
\emph{{IBM ILOG CPLEX} V12.1 User's Manual for {CPLEX}}, 2009.

\bibitem{FPGA}
S.~Kilts, \emph{Advanced FPGA design : architecture, implementation, and
  optimization}.\hskip 1em plus 0.5em minus 0.4em\relax Hoboken, NJ: Wiley,
  2007.

\end{thebibliography}


\end{document}